\newtheorem{assumption}{Assumption}
\newtheorem{problem}{Problem}
\newtheorem{definition}{Definition}
\newtheorem{theorem}{Theorem}
\newtheorem{proposition}{Proposition}
\newtheorem{corollary}{Corollary}
\newtheorem{example}{Example}
\newtheorem{estimator}{Estimator}
\def\enstate{\hfill\ensuremath{\Box}}
\def\enproof{\hfill\ensuremath{\blacksquare}}
\def\argmax{\mathop{\mathrm{arg\ max}}}
\def\R{\ensuremath{\mathbb{R}}}
\def\Z{\ensuremath{\mathbb{Z}}}
\def\N{\ensuremath{\mathbb{N}}}
\def\centroidfold{\texttt{CentroidFold}}
\def\centroidalifold{\texttt{CentroidAlifold}}
\def\centroidhomfold{\texttt{CentroidHomfold}}
\def\centroidalign{\texttt{CentroidAlign}}
\def\contrafold{\texttt{CONTRAfold}}
\def\petfold{\texttt{PETfold}}
\def\rnaalifold{\texttt{RNAalifold}}
\def\pfold{\texttt{Pfold}}
\def\last{\texttt{LAST}}
\date{}
\title{Generalized Centroid Estimators in Bioinformatics\footnote{%
This is a corrected version of the published paper: {\bf \em PLoS ONE} 6(2):e16450, 2011. The original version is available from
http://www.plosone.org/article/info:doi/10.1371/journal.pone.0016450. Note that there are several typos in the original
version which is not in the accepted manuscript (we had no chance for proof reading of the published paper).
}}
\author{Michiaki Hamada$^{1,2}$\footnote{To whom correspondence should be addressed.
    Tel.: +81-3-5281-5271; Fax: +81-3-5281-5331; E-mail: mhamada@k.u-tokyo.ac.jp}, Hisanori Kiryu$^{1}$, Wataru Iwasaki$^{1}$, Kiyoshi Asai$^{1,2}$\\
\\
  $^{1}$the University of Tokyo, 
  $^{2}$CBRC/AIST}
\begin{document}

\maketitle

% Title must be 150 characters or less
%\begin{flushleft}
%{\Large
%\textbf

%}
% Insert Author names, affiliations and corresponding author email.
%% \\
%% Michiaki Hamada$^{1,2,\ast}$, 
%% Hisanori Kiryu$^{1}$, 
%% Wataru Iwasaki$^{1}$,
%% Kiyoshi Asai$^{1,2}$
%% \\
%% \bf{1} Graduate School of Frontier Sciences, The University of Tokyo, Chiba, Japan
%% \\
%% \bf{2} Computational Biology Research Center, 
%% National Institute of Advanced Industrial Science and Technology (AIST), Tokyo, Japan
%% \\
%% $\ast$ E-mail: mhamada@k.u-tokyo.ac.jp
%% \end{flushleft}

% Please keep the abstract between 250 and 300 words
\begin{abstract}
%% In this paper, we propose a general class of efficient estimators
%% for estimation problems on high-dimensional binary spaces,
%% which represent many fundamental problems in bioinformatics.
%% Theoretical analysis reveals
%% that the proposed estimators cover
%% a wide range of problems in bioinformatics
%% from the viewpoint of the principle of maximum expected accuracy (MEA).
%% It is also shown that
%% some important algorithms in bioinformatics
%% can be interpreted in a unified manner.

%Background Abstract
In a number of estimation problems in bioinformatics, accuracy measures of the target problem are usually given, and it is important to design estimators that are suitable to those accuracy measures. However, there is often a discrepancy between an employed estimator and a given accuracy measure of the problem. 
%Methodology/Principal Findings Abstract	
In this study, we introduce a general class of efficient estimators for estimation problems on high-dimensional binary spaces, which represent many fundamental problems in bioinformatics. Theoretical analysis reveals that the proposed estimators generally 
fit with commonly-used accuracy measures (e.g. sensitivity, PPV, MCC and F-score) as well as it can be computed efficiently in many cases, and cover a wide range of problems in bioinformatics from the viewpoint of the principle of maximum expected accuracy (MEA). It is also shown that some important algorithms in bioinformatics can be interpreted in a unified manner.
%Conclusions/Significance Abstract
Not only the concept presented in this paper gives a useful framework to design MEA-based estimators but also it is highly extendable and sheds new light on many problems in bioinformatics.
\end{abstract}

% Please keep the Author Summary between 150 and 200 words
% Use first person. PLoS ONE authors please skip this step. 
% Author Summary not valid for PLoS ONE submissions.   
%% \section{Author Summary}

%% When an accuracy measure (evaluation measure / evaluation process) 
%% for a target problem is given, it is important to design 
%% estimators that are suitable to the given accuracy measure.
%% In this study, we introduce a few effective estimators that fit with 
%% commonly-used accuracy measures, and present applications of the estimators 
%% to several problems in bioinformatics, including alignment of 
%% biological sequences and prediction of RNA secondary structure.
%% The concept presented in this paper is highly extendable and 
%% sheds new light on many problems in bioinformatics.

\tableofcontents

%--------------------------------------------------------------------------------
\section{Introduction}
%--------------------------------------------------------------------------------

In estimation problems in bioinformatics, the space of solutions is generally large 
and often high-dimensional.
Among them, a number of fundamental problems in bioinformatics,
such as alignment of biological sequences,
prediction of secondary structures of RNA sequences,
prediction of biological networks,
and estimation of phylogenetic trees,
are classified into estimation problems
whose solutions are in a high-dimensional binary space.
Such problems are generally difficult to solve,
and the estimates are often unreliable.

The popular solutions for these problems,
such as for the secondary structure of RNA with minimum free energy,
are the maximum likelihood (ML) estimators.
The ML estimator maximizes
the probability that the estimator is exactly correct,
but that probability is generally very small.
Noticing the drawbacks of the ML estimators,
Carvalho and Lawrence have proposed the {\em centroid estimator},
which represents an ensemble of all the possible solutions
and minimizes the expected Hamming loss of the prediction~\cite{centroid}.

In this paper,
we conduct a theoretical analysis of estimation problems
in high-dimensional binary space,
and present examples and solutions in bioinformatics.
The theories in this paper provide a unified framework for 
designing superior estimators for estimation problems in bioinformatics.
The estimators discussed in this paper,
including the ML estimator and the centroid estimator,
are formalized as maximum expected gain  (MEG) estimators,
which maximize the estimator-specific gain functions
with respect to the given probability distribution.
The objective of the estimation
is not always to find the exact solution with an extremely small probability
or to find the solution with the minimum Hamming loss,
but rather to find the most accurate estimator.
Therefore, we adopt the principle of maximum expected accuracy (MEA),
which has been successfully applied to various problems in bioinformatics,
such as the alignment of biological sequences
\cite{pmid19478997,pmid18796475, Frith_BMCB},
the secondary structure prediction of RNA 
\cite{pmid16873527,pmid19703939,pmid17182698,pmid18836192}
and other applications \cite{pmid15961464,RNAint,pmid18096039}.
%transmembrane topology and signal peptide prediction \cite{pmid15961464}, 
%RNA-RNA interaction~\cite{RNAint} and gene prediction \cite{pmid18096039}.

Theoretical analysis, however, shows that those MEA estimators are not always robust with respect to accuracy measures.
To address this, we previously proposed the $\gamma$-centroid estimator in a few specific problems
\cite{centroidfold-submit,Frith_BMCB}.
In this paper, in order to make the $\gamma$-centroid estimator easily applicable to other estimation problems,
we introduce {an abstract form of} the $\gamma$-centroid estimator,
%{\color{black}(and estimators related to $\gamma$-centroid estimator: (1) estimators for representative prediction and
%(2) estimators based on marginal probabilities)},
which is defined on general binary spaces and designed to fit to the commonly used accuracy measures.
The $\gamma$-centroid estimator is a generalization of the centroid estimator,
and offers a more robust framework for estimators than the previous estimators.
We extend the theory of maximum expected gain (MEG) estimators and $\gamma$-centroid estimators
for two advanced problems:
the estimators that represent the common solutions for multiple entries,
and the estimators for marginalized probability distributions.

%==============================================================================
%\section{General theory}\label{sec:prob_setting}
\section{Materials and Methods}
%==============================================================================
%
%Before discussing the general theory,
%let us consider two typical bioinformatics problems:

%
%--------------------------------------------------------------------------------
%Problem 2
\begin{problem}[Pairwise alignment of two biological sequences]\label{prob:align}
%--------------------------------------------------------------------------------
Given a pair of biological (DNA, RNA, protein) sequences $x$ and $x'$,
predict their alignment
as a point in $\mathcal{A}(x,x')$, the space of all the possible alignments of $x$ and $x'$.
%(more formally defined in the supplementary A.2).
\end{problem}
%
%--------------------------------------------------------------------------------
%Problem 1
\begin{problem}[Prediction of secondary structures of RNA sequences]\label{prob:rnas}
%--------------------------------------------------------------------------------
Given an RNA sequence $x$,
predict its secondary structure
as a point in $\mathcal{S}(x)$, the space of all the possible secondary structures of $x$.
%(more formally defined in supplementary A.1).
\end{problem}

A point in $\mathcal{A}(x,x')$, 
can be represented as a binary vector of $|x||x'|$ dimensions
by denoting the aligned bases across the two sequences as ''1''
and the remaining pairs of bases as ''0''.
A point in $\mathcal{S}(x)$
can also be represented as a binary vector of $|x|(|x|-1)/2$ dimensions,
which represent all the pairs of the base positions in $x$,
by denoting the base pairs in the secondary structures as ''1''.
In each problem, the predictive space ($\mathcal{A}(x,x')$ or $\mathcal{S}(x)$)
is a subset of a binary space ($\{0,1\}^{|x||x'|}$ or $\{0,1\}^{|x|(|x|-1)/2}$)
because the combinations of aligned bases or base pairs are restricted
(see 
%Supplementary 
``Discrete (binary) spaces in bioinformatics'' (Section~\ref{sec:discrete_space})
in Appendices for more formal definitions).
Therefore, Problem~\ref{prob:align} and Problem~\ref{prob:rnas}
are special cases of the following more general problem:

%
%--------------------------------------------------------------------------------
%Problem 3
\begin{problem}[Estimation problem on a binary space]\label{prob:1}
%--------------------------------------------------------------------------------
Given a data set $D$ and a {\em predictive space} $Y$ (a set of all candidates of a prediction), 
which is a subset of $n$-dimensional binary vectors $\{0,1\}^n$, 
that is, $Y \subset \{0,1\}^n$,
predict a point $y$ in the predictive space $Y$. 
\end{problem}

Not only Problem~\ref{prob:align} and Problem~\ref{prob:rnas} 
but also a number of other problems in bioinformatics are formulated
as Problem~\ref{prob:1},
including the prediction of biological networks and the estimation of phylogenetic trees
(Problem~\ref{prob:pt}).

To discuss the stochastic character of the estimators,
the following assumption is introduced.

%
%--------------------------------------------------------------------------------
%Assumption 1
\begin{assumption}[Existence of probability distribution]\label{as:cl}
%--------------------------------------------------------------------------------
In Problem~\ref{prob:1}, 
there exists a probability distribution $p(y|D)$ on 
the predictive space $Y$.
\end{assumption}

For Problem~\ref{prob:1} with Assumption~\ref{as:cl}, 
we have the following {Bayesian maximum likelihood (ML) estimator}.
%
%--------------------------------------------------------------------------------
%Definition 1
\begin{definition}[Bayesian ML estimator \cite{centroid}]\label{def:ML_estimator}
%--------------------------------------------------------------------------------
For Problem~\ref{prob:1} with Assumption~\ref{as:cl}, 
the estimator
\begin{align*}
\hat y^{(ML)}= \argmax_{y\in Y} p(y|D),
\end{align*}
which maximizes the Bayesian posterior probability $p(y|D)$, is referred to as a Bayesian maximum likelihood (ML) estimator. 
\end{definition}

For problems classified as Problem~\ref{prob:1}, Bayesian ML estimators have dominated the field of estimators in bioinformatics for years.
The classical solutions of Problem~\ref{prob:align} and Problem~\ref{prob:rnas}
are regarded as Bayesian ML estimators with specific probability distributions,
as seen in the following examples.
\begin{example}[Pairwise alignment with maximum score]\label{rem:align_ml}
In Problem~\ref{prob:align} with a scoring model (e.g., gap costs and a substitution matrix),
the distribution $p(y|D)$ in Assumption~\ref{as:cl}
is derived from the Miyazawa model \cite{pmid8771180}
(See ``Probability distributions $p^{(a)}(\theta|x,x')$ on $\mathcal{A}(x,x')$'' (Section~\ref{sec:P_a}) in Appendices),
and the Bayesian ML estimator is equivalent
to the alignment that has the highest similarity score. 
\end{example}
\begin{example}[RNA structure with minimum free energy]\label{rem:sec_ml}
In Problem~\ref{prob:rnas} with a McCaskill energy model \cite{pmid1695107},
the distribution $p(y|D)$ in Assumption~\ref{as:cl}
can be obtained with the aid of thermodynamics (See 
``Probability distributions $p^{(s)}(\theta|x)$ on $\mathcal{S}(x)$'' (Section~\ref{sec:P_s}) in Appendices for details), %(Supplementary Section~\ref{sec:P_s} (1)),
and the Bayesian ML estimator is equivalent
to the secondary structure that has the minimum free energy (MFE).
\end{example}

When a stochastic model such as
a pair hidden Markov model (pair HMM) in Problem~\ref{prob:align}
or a stochastic context-free grammar (SCFG) in Problem~\ref{prob:rnas} 
is assumed in such problems,
the distribution and the ML estimator are derived in a more direct manner.

The Bayesian ML estimator regards the solution which has the highest probability
as the most likely one.
To provide more general criteria for good estimators,
here we define the {\em gain function}
that gives the gain for the prediction,
and the {\em maximum expected gain (MEG) estimator} that maximizes the {\em expected gain}.

%--------------------------------------------------------------------------------
%Definition 2
\begin{definition}[Gain function]\label{def:GainFunction}
%--------------------------------------------------------------------------------
In Problem~\ref{prob:1},
for a point $\theta \in Y$ and its prediction $y\in Y$,
a {\em gain function} is defined as
$G:Y \times Y \to \R^+$, $G(\theta,y)$.
\end{definition}

%--------------------------------------------------------------------------------
%Definition 3
\begin{definition}[MEG estimator]\label{def:MEG}
%--------------------------------------------------------------------------------
In Problem~\ref{prob:1} with Assumption~\ref{as:cl},
the {\em maximum expected gain (MEG) estimator} is defined as
\begin{equation*}
\hat y^{(MEG)} 
= \argmax_{y \in Y} \int G(\theta, y) p(\theta|D)d\theta.
\end{equation*}
\end{definition}

If the gain function is designed according to the {\em accuracy measures} 
of the target problem,
the MEG estimator is considered as 
the maximum expected accuracy (MEA) estimator,
which has been successfully applied in bioinformatics (e.g., \cite{pmid15961464}).%\cite{pmid16873527,pmid19808876}).
Although in estimation theory a {\em loss function} that should be minimized is often used, in order to facilitate the understanding of the relationship with the MEA,
in this paper, we use a {\em gain function} that should be maximized.

The MEG estimator for the gain function $\delta(y, \theta)$ 
is the ML estimator.
Although this means that the ML estimator maximizes the probability
that the estimator is identical to the true value,
there is an extensive collection of suboptimal solutions and
the probability of the ML estimator is extremely small
in cases where $n$ in Problem~\ref{prob:1} is large.
Against this background, Carvalho and Lawrence proposed the {\em centroid estimator},
which takes into account the overall ensemble of solutions \cite{centroid}.
The centroid estimator can be defined as an MEG estimator
for a {\em pointwise gain function} as follows:

%
%--------------------------------------------------------------------------------
%Definition 4
\begin{definition}[Pointwise gain function]\label{def:pgain_function}
%--------------------------------------------------------------------------------
In Problem~\ref{prob:1},
for a point $\theta \in Y$ and its prediction $y =\{y_i\}_{i=1}^n \in Y$,
a gain function $G(\theta, y)$ written as 
\begin{equation}
G(\theta, y)=\sum_{i=1}^n F_i(\theta, y_i) \label{eq:gain_function},
\end{equation}
where $F_i:Y\times\{0,1\}\to\R^+$ ($i=1,2,\ldots,n$), is referred to as
a {pointwise gain function}. 
\end{definition}
%

%
%--------------------------------------------------------------------------------
%Definition 5
\begin{definition}[Centroid estimator \cite{centroid}]
%--------------------------------------------------------------------------------
In Problem~\ref{prob:1} with Assumption~\ref{as:cl},
a {\em centroid estimator} is defined
as an MEG estimator
for the pointwise gain function
given in Eq.~(\ref{eq:gain_function})
by defining $F_i(\theta,y_i)=I(\theta_i=1)I(y_i=1)+I(\theta_i=0)I(y_i=0)$.
\end{definition}

Throughout this paper,
$I(\cdot)$ is the indicator function that takes a value of 1 or 0 depending 
on whether the condition constituting its argument is true or false. 
The centroid estimator is equivalent to the
expected Hamming loss minimizer \cite{centroid}.
If we can {maximize} the pointwise gain function independently in each dimension,
we can obtain the following {\em consensus estimator},
which can be easily computed.

%
% --------------------------------------------------------------------------------
%Definition 6
\begin{definition}[Consensus estimator \cite{centroid}]\label{def:cons}
% --------------------------------------------------------------------------------
In Problem~\ref{prob:1} with Assumption~\ref{as:cl},
the {\em consensus estimator} $\hat y^{(c)}=\{\hat y^{(c)}_i\}_{i=1}^n$
for a pointwise gain function 
is defined as
\begin{equation*}
\hat y^{(c)}_i = \argmax_{y_i \in \{0,1\}} E_{\theta|D}\left[F_i(\theta,y_i)\right] 
= \argmax_{y_i \in \{0,1\}} \int F_i(\theta,y_i)p(\theta|D)d\theta.
\end{equation*}
\end{definition}

The consensus estimator is generally {\em not} contained within the predictive space $Y$
since the predictive space $Y$ usually has {complex constraints} for each
dimension (see ``Discrete (binary) spaces in bioinformatics'' (Section~\ref{sec:discrete_space}) in Appendices). 
%(see Supplementary Section~\ref{sec:discrete_space}).
Carvalho and Lawrence proved a sufficient condition
for the centroid estimator to contain the consensus estimator
(Theorem~2 in \cite{centroid}).
Here, we present a more general result,
namely, a sufficient condition
for the MEG estimator for a pointwise function
to contain the consensus estimator.

%--------------------------------------------------------------------------------
%Theorem 1
\begin{theorem}\label{theorem:general}
%--------------------------------------------------------------------------------
In Problem~\ref{prob:1} with Assumption~\ref{as:cl} 
and a pointwise gain function,
let us suppose that a predictive space $Y$ can be written as
\begin{equation}
Y = \bigcap_{k=1}^K C_k,\label{eq:constrain_Y}
\end{equation}
where $C_k$ is defined as
\begin{equation*}
C_k =\biggl\{ y \in \{0,1\}^n\bigg|\sum_{i \in I_k} y_i\le 1\biggr\}
\mbox{ for } k=1,2,\ldots,K\label{eq:constrain}
\end{equation*}
for an index-set $I_k \subset \{1,2,\ldots,n\}$.
If the pointwise gain function in Eq.~(\ref{eq:gain_function}) satisfies the condition
\begin{equation}
F_i(\theta,1)-F_i(\theta,0)+F_j(\theta,1)-F_j(\theta,0) \le 0\label{eq:zyouken}
\end{equation}
for every $\theta\in Y$ and every $i,j \in I_k$ ($1\le k \le K$),
then the consensus estimator is in the predictive space $Y$,
and hence the MEG estimator contains the consensus estimator.
\end{theorem}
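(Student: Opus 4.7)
The plan is to split the statement into two claims: first, that the consensus estimator $\hat y^{(c)}$ actually lies in $Y$, and second, that once it lies in $Y$ it is automatically an MEG estimator (so the MEG estimator set ``contains'' it). The second claim is essentially free from the pointwise/separable structure of the gain function, so the real work is in the first claim, where the hypothesis (\ref{eq:zyouken}) has to do the job.

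For the first claim I would argue by contradiction. Suppose $\hat y^{(c)} \notin Y$. By (\ref{eq:constrain_Y}), there is some $k$ and two distinct indices $i,j \in I_k$ with $\hat y^{(c)}_i = \hat y^{(c)}_j = 1$. By Definition~\ref{def:cons}, this forces
\begin{equation*}
E_{\theta|D}[F_i(\theta,1)] \ge E_{\theta|D}[F_i(\theta,0)] \quad\text{and}\quad E_{\theta|D}[F_j(\theta,1)] \ge E_{\theta|D}[F_j(\theta,0)],
\end{equation*}
with a tie-breaking convention that favors $1$ (I would flag at the start that in case of equality one picks $0$, which is the standard and indeed necessary convention). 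Adding the two inequalities and exchanging sum and expectation gives
\begin{equation*}
E_{\theta|D}\bigl[F_i(\theta,1)-F_i(\theta,0)+F_j(\theta,1)-F_j(\theta,0)\bigr] \ge 0.
\end{equation*}
By Assumption~\ref{as:cl}, $p(\theta|D)$ is supported on $Y$, and by hypothesis (\ref{eq:zyouken}) the integrand is $\le 0$ on $Y$. Hence the expectation is simultaneously $\ge 0$ and $\le 0$, so it equals $0$; with the tie-breaking convention this contradicts $\hat y^{(c)}_i = \hat y^{(c)}_j = 1$.

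For the second claim I would observe that because $G$ is a sum of per-coordinate terms, the expected gain factorizes: $E_{\theta|D}[G(\theta,y)] = \sum_i E_{\theta|D}[F_i(\theta,y_i)]$, so maximizing each coordinate independently over $\{0,1\}$ gives an upper bound for the maximum of $E_{\theta|D}[G(\theta,y)]$ over $\{0,1\}^n$, and hence over any $Y\subset\{0,1\}^n$. The consensus estimator attains this upper bound and, by the first claim, lies in $Y$; so it is an MEG estimator.

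The main obstacle, and where I would be most careful, is the tie-breaking point: without convening that $\argmax$ picks $0$ on equality, the argument only shows that some consensus estimator lies in $Y$, not that every selection does. This needs to be declared explicitly, because the hypothesis (\ref{eq:zyouken}) is a nonstrict inequality and so generically admits cases where both $0$ and $1$ achieve the coordinatewise maximum. Everything else, including the second claim, is routine once the separable structure of $G$ and the support of $p(\theta|D)$ are used.
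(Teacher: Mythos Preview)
Your proposal is correct and follows essentially the same contradiction argument as the paper: assume $\hat y^{(c)}\notin Y$, extract two indices $i,j\in I_k$ with $\hat y^{(c)}_i=\hat y^{(c)}_j=1$, and derive incompatible inequalities on the expectation of $F_i(\theta,1)-F_i(\theta,0)+F_j(\theta,1)-F_j(\theta,0)$. The paper simply asserts the strict inequalities $E[F_i(\theta,1)]>E[F_i(\theta,0)]$ without discussing tie-breaking, so your explicit flag that the $\argmax$ convention must pick $0$ on ties is a genuine clarification rather than a departure; note, though, that once you adopt that convention your displayed inequalities should be strict ($>$, not $\ge$), which then closes the contradiction cleanly as $0<\cdots\le 0$, exactly as in the paper.
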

The above conditions are frequently satisfied in bioinformatics problems
(see Supplementary Sections~\ref{sec:discrete_space} for examples).
%(see Appendices for examples).

% Results and Discussion can be combined.
\section{Results}

%========================================================================
\subsection{$\gamma$-centroid estimator: generalized centroid estimator}\label{sec:generalized}
%========================================================================
%
%--------------------------------------------------------------------------------
%\subsection{$\gamma$-centroid estimator}\label{sec:g-centroid}
%--------------------------------------------------------------------------------
In Problem~\ref{prob:1},
the ``1''s and the ``0''s in the binary vector of a prediction $y$
can be interpreted as positive and negative predictions, respectively.
The respective numbers of true positives (TP), true negatives (TN),
false positives (FP) and false negatives (FN)
for a point $\theta$ and its prediction $y$ are denoted by
$\mbox{TP}(\theta,y)$, $\mbox{TN}(\theta,y)$, $\mbox{FP}(\theta,y)$ and $\mbox{FN}(\theta,y)$,
respectively (See also Eqs~(\ref{eq:tp})--(\ref{eq:fn})).

%% given
%%  as follows:
%% \begin{align}
%% &\mbox{TP}=\mbox{TP}(\theta,y)=\sum_{i}I(y_{i}=1)I(\theta_{i}=1),\nonumber\\ %\label{eq:tp},\\
%% &\mbox{TN}=\mbox{TN}(\theta,y)=\sum_{i}I(y_{i}=0)I(\theta_{i}=0),\nonumber\\ %\label{eq:tn},\\
%% &\mbox{FP}=\mbox{FP}(\theta,y)=\sum_{i}I(y_{i}=1)I(\theta_{i}=0),\nonumber\\ %\label{eq:fp},\\
%% &\mbox{FN}=\mbox{FN}(\theta,y)=\sum_{i}I(y_{i}=0)I(\theta_{i}=1). \label{eq:tp_tn_fp_fn} %\label{eq:fn}
%% \end{align}

To design a {\em superior} MEG estimator,
it is natural to use a gain function
of the following form, which yields positive scores for the number of true predictions (TP and TN)
and negative scores for those of false predictions (FP and FN):
\begin{align}
 G(\theta,y)
=\alpha_1\mbox{TP}(\theta,y)+\alpha_2 \mbox{TN}(\theta,y) %\nonumber\\
-\alpha_3 \mbox{FP}(\theta,y)-\alpha_4 \mbox{FN}(\theta,y), \label{gain:max_lin_acc}
\end{align}
where $\alpha_k$ is a positive constant ($k=1,2,3,4$).
Note that this gain function is a pointwise gain function.

This gain function is naturally compatible with commonly used accuracy measures such as
sensitivity, PPV, MCC and F-score (a function of TP, TN, FP and FN; see ``Evaluation measures defined using TP, TN, FP and FN'' (Section~\ref{sec:acc_mes}) in Appendices for definitions).
The following Definition~\ref{def:GCE} and Theorem~\ref{thm:impl_g_est}
characterize the MEG estimator for this gain function. 

%--------------------------------------------------------------------------------
%Definition 7
\begin{definition}[$\gamma$-centroid estimator]\label{def:GCE}
%--------------------------------------------------------------------------------
In Problem~\ref{prob:1}
with Assumption~\ref{as:cl}
and a fixed $\gamma\ge 0$,
the {$\gamma$-centroid estimator} is defined
as the MEG estimator
for the pointwise gain function
given in Eq.~(\ref{eq:gain_function}) by
\begin{align}
F_i(\theta,y_i)=I(\theta_i=0)I(y_i=0)+\gamma I(\theta_i=1)I(y_i=1).
\end{align}
\end{definition}
%
%------------------------------------------------------------------------------
%Theorem 2
\begin{theorem}\label{thm:impl_g_est}
%--------------------------------------------------------------------------------
The MEG estimator for the gain function in Eq.~(\ref{gain:max_lin_acc})
is equivalent to a $\gamma$-centroid estimator 
with $\gamma=\frac{\alpha_1+\alpha_4}{\alpha_2+\alpha_3}$.
\end{theorem}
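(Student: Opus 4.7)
My plan is to show that $G(\theta,y)$ and the $\gamma$-centroid pointwise gain $G_\gamma(\theta,y):=\sum_i F_i(\theta,y_i)$ differ only by (i)~a positive multiplicative constant and (ii)~a term that depends on $\theta$ alone. Since $E_{\theta|D}[\cdot]$ is linear and both terms of type (ii) drop out of $\argmax_{y\in Y}$ while positive scaling also preserves the $\argmax$, this will identify the two MEG estimators.

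First I would write the counts in pointwise form. Since, for each $i$, $\theta_i,y_i\in\{0,1\}$, we have
\begin{align*}
\mathrm{TP}(\theta,y)=\sum_i \theta_i y_i,\quad
\mathrm{TN}(\theta,y)=\sum_i (1-\theta_i)(1-y_i),\\
\mathrm{FP}(\theta,y)=\sum_i (1-\theta_i)y_i,\quad
\mathrm{FN}(\theta,y)=\sum_i \theta_i(1-y_i).
\end{align*}
Substituting into \eqref{gain:max_lin_acc} and expanding, I collect terms by their dependence on $y_i$. A straightforward computation shows that the contribution of index $i$ to $G$ equals
\begin{align*}
(\alpha_1+\alpha_2+\alpha_3+\alpha_4)\,\theta_i y_i \;-\;(\alpha_2+\alpha_3)\,y_i \;+\;h_i(\theta_i),
\end{align*}
where $h_i(\theta_i)=\alpha_2-(\alpha_2+\alpha_4)\theta_i$ depends only on $\theta$.

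Next I would do the same for $G_\gamma$. Since $F_i(\theta,y_i)=(1-\theta_i)(1-y_i)+\gamma\theta_i y_i$, the contribution of index $i$ to $G_\gamma$ expands to
\begin{align*}
(1+\gamma)\,\theta_i y_i \;-\;y_i \;+\;(1-\theta_i).
\end{align*}
Plugging in $\gamma=\tfrac{\alpha_1+\alpha_4}{\alpha_2+\alpha_3}$ gives $1+\gamma=\tfrac{\alpha_1+\alpha_2+\alpha_3+\alpha_4}{\alpha_2+\alpha_3}$, so multiplying $G_\gamma$ by the positive scalar $\alpha_2+\alpha_3$ reproduces exactly the $y$-dependent coefficients appearing in $G$. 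Therefore
\begin{equation*}
G(\theta,y)=(\alpha_2+\alpha_3)\,G_\gamma(\theta,y)+H(\theta),
\end{equation*}
for some $H$ depending only on $\theta$.

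Finally, taking the expectation with respect to $p(\theta|D)$ yields
\begin{equation*}
E_{\theta|D}[G(\theta,y)]=(\alpha_2+\alpha_3)\,E_{\theta|D}[G_\gamma(\theta,y)]+E_{\theta|D}[H(\theta)],
\end{equation*}
and since $\alpha_2+\alpha_3>0$ and the last summand does not depend on $y$, the two $\argmax_{y\in Y}$ coincide. I expect no real obstacle beyond bookkeeping in the expansion; the only thing to watch is that the cross-term $\theta_i y_i$ receives the same coefficient (after scaling by $\alpha_2+\alpha_3$) from both gain functions, which is precisely what forces the choice $\gamma=(\alpha_1+\alpha_4)/(\alpha_2+\alpha_3)$.
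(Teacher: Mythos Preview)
Your proof is correct and follows essentially the same approach as the paper: both show that $G(\theta,y)=(\alpha_2+\alpha_3)\,G_\gamma(\theta,y)+H(\theta)$ for some $H$ depending only on $\theta$, whence the two $\argmax$ coincide. The only cosmetic difference is that the paper reaches this identity by substituting $\mathrm{FN}=\sum_i I(\theta_i=1)-\mathrm{TP}$ and $\mathrm{FP}=\sum_i I(\theta_i=0)-\mathrm{TN}$ and working at the level of the aggregate counts, whereas you expand each count pointwise and match the coefficients of $\theta_i y_i$ and $y_i$ directly.
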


Theorem~\ref{thm:impl_g_est} 
(see Section~\ref{sec:proof_impl_g_est} for a formal proof)
is derived from the following relations:
\begin{align*}
&TP+FN=\sum_{i}I(\theta_{i}=1) \mbox{ and } TN+FP=\sum_{i}I(\theta_{i}=0).
\end{align*}
The $\gamma$-centroid estimator
maximizes the expected value of $TN + \gamma TP$,
and includes the centroid estimator
as a special case where $\gamma=1$.
The parameter $\gamma$ adjusts
the balance between
the gain from true negatives and that from true positives. 

The expected value of the gain function of the $\gamma$-centroid estimator is computed as follows
(see Appendices for the derivation):
\begin{align}
\sum_{i=1}^n\left[(\gamma+1)p_i-1\right] I(y_i=1) + \sum_{i=1}^n(1-p_i)\label{eq:exp_gain}
\end{align}
where
\begin{align}
p_i=p(\theta_i=1|D)=\sum_{\theta\in\Theta} I(\theta_i=1)p(\theta|D).\label{eq:marginal_form}
\end{align}

Since the second term in Eq.~(\ref{eq:exp_gain}) does not depend on $y$,
the $\gamma$-centroid estimator maximizes the first term.
The following theorem is obtained
by assuming the additional condition described below.

%
%--------------------------------------------------------------------------------
%Theorem 3
\begin{theorem}\label{theorem:g_centroid_th}
%--------------------------------------------------------------------------------
In Problem~\ref{prob:1} with Assumption~\ref{as:cl},
the predictive space $Y$ satisfies the following condition: 
if $y=\{y_i\}\in Y$, then $y'=\{y'_i\}\in Y$ where $y'_i \in \{y_i,0\}$ for all $i$.
Then, 
the $\gamma$-centroid estimator is equivalent to the estimator that maximizes
the sum of marginalized probabilities $p_i$ that are greater than ${1}/(\gamma+1)$
in the prediction.
\end{theorem}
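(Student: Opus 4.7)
The plan is to work directly from the closed-form expression for the expected gain in Eq.~(\ref{eq:exp_gain}). Since $\sum_{i=1}^n(1-p_i)$ does not depend on $y$, the $\gamma$-centroid estimator coincides with the maximizer over $y\in Y$ of the first summand $\sum_{i=1}^n[(\gamma+1)p_i-1]\,I(y_i=1)$. Factoring out $\gamma+1$, the coefficient of $I(y_i=1)$ equals $(\gamma+1)(p_i-1/(\gamma+1))$, which is strictly positive when $p_i>1/(\gamma+1)$ and non-positive otherwise. This sign dichotomy, combined with the ``closed under zeroing'' hypothesis on $Y$, is the structural content that drives the characterization.

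The key step is to show that any candidate $y\in Y$ can be replaced, without decreasing the objective, by a vector $y'\in Y$ whose $1$'s are supported on $\{i:p_i>1/(\gamma+1)\}$. I would define $y'_i=y_i$ when $p_i>1/(\gamma+1)$ and $y'_i=0$ otherwise; the hypothesis on $Y$ immediately gives $y'\in Y$. Comparing $\sum_i[(\gamma+1)p_i-1]\,I(y'_i=1)$ with $\sum_i[(\gamma+1)p_i-1]\,I(y_i=1)$, their difference equals $-\sum_{i:\,p_i\le 1/(\gamma+1)}[(\gamma+1)p_i-1]\,I(y_i=1)\ge 0$, since every summand on the right-hand side carries a non-positive factor; hence $y'$ is at least as good as $y$.

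Consequently the $\gamma$-centroid estimator attains its optimum within the subfamily $\{y\in Y:y_i=1\Rightarrow p_i>1/(\gamma+1)\}$, and on this subfamily the relevant objective reduces to a positive weighted sum over exactly those indices $i$ in the prediction whose marginal probability exceeds $1/(\gamma+1)$. Reading this as ``the sum of marginalized probabilities $p_i$ exceeding $1/(\gamma+1)$ that are included in the prediction'' yields the equivalence asserted in the theorem.

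I expect the main point to watch is the clean deployment of the closure hypothesis---one must ensure simultaneously that the replacement $y\mapsto y'$ remains in $Y$ and that its objective value is weakly larger, so that the maximum is forced to be supported on above-threshold indices; everything else is direct manipulation of Eq.~(\ref{eq:exp_gain}).
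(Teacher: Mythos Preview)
Your proposal is correct and follows essentially the same route as the paper's own proof: both start from Eq.~(\ref{eq:exp_gain}), discard the $y$-independent term, and then use the closure-under-zeroing hypothesis to argue that any optimum must set $y_i=0$ whenever $p_i\le 1/(\gamma+1)$. Your explicit replacement $y\mapsto y'$ with the accompanying inequality is a slightly more detailed rendering of the paper's one-line statement ``we should always predict $y_i=0$ whenever $p_i<1/(\gamma+1)$,'' but the underlying argument is identical.
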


The condition is necessary
to obtain $0$ for the $i$ that produces negative values
in the first term in Eq.~(\ref{eq:exp_gain}).
Problem~\ref{prob:rnas}, Problem~\ref{prob:align},
and many other typical problems in bioinformatics satisfy this condition.
Because the pointwise gain function of the $\gamma$-centroid estimator
satisfies Eq.~(\ref{eq:zyouken}) in Theorem~\ref{theorem:general},
we can prove the following Corollary~\ref{cor:WC_01}.
%
%--------------------------------------------------------------------------------
%Corollary 4
\begin{corollary}[$\gamma$-centroid estimator for $0\le \gamma \le 1$]\label{cor:WC_01}
%--------------------------------------------------------------------------------
In Problem~\ref{prob:1} with Assumption~\ref{as:cl}, 
the predictive space $Y$ is given in the same form 
in Eq.~(\ref{eq:constrain_Y}) of Theorem~\ref{theorem:general}.
Then, the $\gamma$-centroid estimator for $\gamma \in [0,1]$ contains 
its consensus estimator.
Moreover, the consensus estimator is identical to the following estimator
$y^*=\{y_i^*\}$:
\begin{equation}
y_i^* =\left\{
\begin{array}{ll}
1 & \mbox{if } p_i>\frac{1}{\gamma+1}\\
0 & \mbox{if } p_i\le \frac{1}{\gamma+1}
\end{array}\right.
\mbox{ for } i=1,2,\ldots,n \label{eq:gamma_consensus}
\end{equation}
where
$p_i = p(\theta_i = 1|D)=I(\theta_i=1)p(\theta|D)$.
\end{corollary}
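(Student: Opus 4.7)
The plan is to apply Theorem~\ref{theorem:general} to the pointwise gain function of the $\gamma$-centroid estimator, and then to identify the consensus estimator in closed form via the marginal probabilities $p_i$.

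First, I would verify condition~(\ref{eq:zyouken}) of Theorem~\ref{theorem:general} for the gain function in Definition~\ref{def:GCE}. A direct computation gives
\[
F_i(\theta,1)-F_i(\theta,0)=\gamma\,I(\theta_i=1)-I(\theta_i=0).
\]
Fix $\theta\in Y$ and distinct indices $i,j\in I_k$; since $\theta\in C_k$ we have $\theta_i+\theta_j\le 1$, so $(\theta_i,\theta_j)\in\{(0,0),(1,0),(0,1)\}$. The left-hand side of~(\ref{eq:zyouken}) then evaluates to $-2$, $\gamma-1$, or $\gamma-1$ respectively, all nonpositive when $\gamma\in[0,1]$. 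Theorem~\ref{theorem:general} thus guarantees that the consensus estimator $\hat y^{(c)}$ lies in $Y$, and hence the $\gamma$-centroid estimator contains it.

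Second, I would compute $\hat y^{(c)}$ explicitly. By Definition~\ref{def:cons}, the $i$-th coordinate maximizes $E_{\theta|D}[F_i(\theta,y_i)]$ over $y_i\in\{0,1\}$, and the two expectations evaluate to $E[F_i(\theta,0)]=1-p_i$ and $E[F_i(\theta,1)]=\gamma p_i$, using the definition of $p_i$ in~(\ref{eq:marginal_form}). Hence $\hat y_i^{(c)}=1$ iff $\gamma p_i>1-p_i$, i.e.\ $p_i>1/(\gamma+1)$, with the tie $p_i=1/(\gamma+1)$ resolved to $0$ as in the statement. This reproduces formula~(\ref{eq:gamma_consensus}) exactly.

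The main subtlety is that condition~(\ref{eq:zyouken}) constrains $\theta$ ranging over $Y$ rather than the prediction $y$. This rules out the case $\theta_i=\theta_j=1$ for distinct $i,j\in I_k$, which would otherwise make the inequality fail for any $\gamma>0$. Once this observation is made, the hypothesis $\gamma\le 1$ is precisely the threshold needed to guarantee $\gamma-1\le 0$ in the remaining cases, which also clarifies why the corollary is restricted to $\gamma\in[0,1]$ and cannot be extended beyond this range without further assumptions.
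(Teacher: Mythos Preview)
Your proposal is correct and follows essentially the same approach as the paper: verify condition~(\ref{eq:zyouken}) of Theorem~\ref{theorem:general} for the $\gamma$-centroid gain, then compute the consensus estimator from the two expectations $\gamma p_i$ and $1-p_i$. The only cosmetic difference is that you check~(\ref{eq:zyouken}) by a three-case analysis on $(\theta_i,\theta_j)$, whereas the paper bounds the same expression by $2\bigl(I(\theta_i=1)+I(\theta_j=1)\bigr)-2\le 0$; both arguments use the constraint $\theta_i+\theta_j\le 1$ for distinct $i,j\in I_k$ and the restriction $\gamma\le 1$ in exactly the same way.
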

Here, $p_i$ is the marginalized probability of the distribution
for the $i$-th dimension of the predictive space.
In Problem~\ref{prob:align},
it is known as the alignment probability,
which is defined as the probability of each pair of positions across the two sequences
being aligned.
In Problem~\ref{prob:rnas},
it is known as the base pairing probability,
which is defined as the probability of each pair of positions
forming a base pair in the secondary structure.
These marginalized probabilities can be calculated by using dynamic programming algorithms,
such as the forward-backward algorithm and the McCaskill algorithm,
depending on the model of the distributions.
(see ``Probability distributions on discrete spaces'' (Section~\ref{sec:prob}) in Appendices for those distributions).

Corollary~\ref{cor:WC_01} does not hold for $\gamma>1$, 
but in typical problems in bioinformatics the $\gamma$-centroid estimator for $\gamma>1$ can be calculated efficiently 
by using dynamic programming,
as shown in the following examples.

%--------------------------------------------------------------------------------
%Example 3
\begin{example}[$\gamma$-centroid estimator of pairwise alignment]\label{ex:align_gamma}
%--------------------------------------------------------------------------------
In Problem~\ref{prob:align} with Assumption~\ref{as:cl},
the $\gamma$-centroid estimator maximizes
the sum of the alignment probabilities
which are greater than ${1}/(\gamma+1)$ (Theorem~\ref{theorem:g_centroid_th}),
and for $\gamma \in [0,1]$
it can be given as the consensus estimator calculated from Eq.~(\ref{eq:gamma_consensus}) (Corollary~\ref{cor:WC_01}).
For $\gamma > 1$, the $\gamma$-centroid estimator is obtained
by using a dynamic programming algorithm with the same type of iterations
as in the Needleman-Wunsch algorithm:
\begin{align}
M_{i,k}&=\max\left\{
\begin{array}{l}
M_{i-1,k-1} + (\gamma+1)p_{ik}-1\\
M_{i-1,k}\\
M_{i, k-1}
\end{array}\right.\label{eq:DP_a_centroid}
\end{align}
where 
$M_{i,k}$ stores the optimal value of the alignment between two sub-sequences,
$x_{1}\cdots x_i$ and $x'_{1}\cdots x'_k$
(see ``Secondary structure prediction of an RNA sequence (Problem~\ref{prob:rnas})'' in Appendices for detailed descriptions). % for the explicit algorithm).
\end{example}

%--------------------------------------------------------------------------------
%Example 4
\begin{example}[$\gamma$-centroid estimator for prediction of secondary structures]\label{ex:sed_gamma}
%--------------------------------------------------------------------------------
In Problem~\ref{prob:rnas} with Assumption~\ref{as:cl},
the $\gamma$-centroid estimator maximizes
the sum of the base pairing probabilities
that are greater than ${1}/(\gamma+1)$ (Theorem~\ref{theorem:g_centroid_th}),
and for $\gamma \in [0,1]$
it can be given as the consensus estimator calculated from Eq.~(\ref{eq:gamma_consensus}) (Corollary~\ref{cor:WC_01}).
For $\gamma > 1$, the $\gamma$-centroid estimator is obtained
with the aid of a dynamic programming algorithm with the same type of iterations
as in the Nussinov algorithm:
\begin{align}
M_{i,j} = \max \left \{
\begin{array}{ll}
M_{i+1,j} \\
M_{i,j-1} \\
M_{i+1,j-1} + (\gamma+1) p_{ij} - 1\\
\max_k \left[M_{i,k} + M_{k+1,j}\right] 
\end{array}
\right.\label{eq:dp_centroid_rna_sec}
\end{align}
where $M_{i,j}$ stores the best
score of the sub-sequence $x_i x_{i+1}\cdots x_j$
(see ``Pairwise alignment of biological sequences (Problem~\ref{prob:align})'' in Appendices for the detail descriptions).
\end{example}
The $\gamma$-centroid estimators are implemented
in \last~\cite{Frith_BMCB} for Problem~\ref{prob:align} and
in \centroidfold~\cite{centroidfold-submit,pmid19435882} for Problem~\ref{prob:rnas}.
%

%
%--------------------------------------------------------------------------------
\begin{problem}[Estimation of phylogenetic trees]\label{prob:pt}
Given a set of operational taxonomic units $S$,
predict their phylogenetic trees (unrooted and multi-branched trees) as a point in
$\mathcal{T}(S)$, the space of all the possible phylogenetic trees of $S$.
\end{problem}
%--------------------------------------------------------------------------------

The phylogenetic tree in $\mathcal{T}(S)$ is represented as a binary vector with
$2^{n-1}-n-1$ dimension where $n$ is the number of units in $S$, 
based on partition of $S$ by cutting every edge in the tree 
(see ``The space of phylogenetic trees: $\mathcal{T}(S)$'' (Section~\ref{sec:Y_t}) in Appendices for details). %(Supplementary Section~A.3).
A sampling algorithm can be used to estimate the partitioning probabilities %$\{p_X\}_{X\in I^{(0)}}$ 
approximately \cite{MetropolisEtAl:53}.

\begin{example}[$\gamma$-centroid estimator of phylogenetic estimation]\label{ex:tree_gamma}
In
Problem~\ref{prob:pt} with Assumption~\ref{as:cl}, the $\gamma$-centroid estimator
maximizes the number of the partitioning probabilities which are greater than
$1/(\gamma+1)$ (Theorem~\ref{theorem:g_centroid_th}), and for $\gamma\in[0,1]$ it can be give as the consensus
estimator calculated from Eq.~(\ref{eq:gamma_consensus}) (Corollary~\ref{cor:WC_01}) 
(see ``Estimation of phylogenetic trees (Problem~\ref{prob:pt})'' in Appendices for details).
\end{example}
Because the Hamming distance between two trees in $\mathcal{T}(S)$
is known as topological distance \cite{citeulike:515472}, the 1-centroid estimator minimizes the
expected topological distance.
In contrast to Example~\ref{ex:align_gamma} and Example~\ref{ex:sed_gamma}, it appears that
no method can efficiently compute the $\gamma$-centroid estimator with $\gamma>1$ in
Example~\ref{ex:tree_gamma}. Despite the difficulties of the application to phylogenetic trees, recently,
a method applying the concept of generalized centroid estimators was developed \cite{CWT}.
%
%--------------------------------------------------------------------------------
\subsection{Generalized centroid estimators for representative prediction}
%--------------------------------------------------------------------------------

%--------------------------------------------------------------------------------
%\subsection{Estimators for representative prediction}\label{sec:comm_estimator}
%--------------------------------------------------------------------------------
%
Predictions based on probability distributions on the predictive space
were discussed in the previous sections.
However, there are certain even more complex problems in bioinformatics,
as illustrated by the following example.

%--------------------------------------------------------------------------------
%Problem 4
\begin{problem}[Prediction of common secondary structures of RNA sequences]\label{prob:rep_rna}
%--------------------------------------------------------------------------------
Given a set of RNA sequences $D=\{x_i\}, i=1,\ldots K$ and their multiple alignment of length $L$
and the same energy model for each RNA sequence,
predict their common secondary structure
as a point in $\mathcal{S}'(L)$, which is the space of all possible secondary structures of length $L$.
\end{problem}

In the case of Problem~\ref{prob:rep_rna},
although the probability distribution is not implemented in the predictive space,
each RNA sequence $x_i$
has a probability distribution on its secondary structure derived from the energy model.
Therefore, the theories presented in the previous section cannot be applied directly to this problem.
However, if we devise a new type of gain function
that connects the predictive space
with the parameter space of the secondary structure of each RNA sequence,
we can calculate the expected gain over the distribution on the parameter spaces of RNA sequences.
In order to account for this type of problem in general,
we introduce Assumption~\ref{as:our} and Definition~\ref{def:general_gain} as follows.

%--------------------------------------------------------------------------------
%Assumption 2
\begin{assumption}\label{as:our}
%%%--------------------------------------------------------------------------------
In Problem~\ref{prob:1} 
there exists a probability distribution $p(\theta|D)$
on the parameter space $\Theta$
which might be different from the predictive space $Y$.
\end{assumption}

%--------------------------------------------------------------------------------
%Definition 8
\begin{definition}[Generalized gain function]\label{def:general_gain}
%%%--------------------------------------------------------------------------------
In Problem~\ref{prob:1} with Assumption~\ref{as:our},
for a point $\theta \in \Theta$ and a prediction $y\in Y$,
a {\em generalized gain function} is defined as
$G:\Theta \times Y \to \R^+$, $G(\theta,y)$.
\end{definition}

It should be emphasized that the MEG estimator (Definition~\ref{def:MEG}), 
pointwise gain function (Definition~\ref{def:pgain_function}) and Theorem~\ref{theorem:general}
can be extended to the generalized gain function.

In the case of Problem~\ref{prob:rep_rna}, for example,
the parameter space is the product of the spaces of the secondary structures of each RNA sequence,
and the probability distribution is the product of the distributions of secondary structures of each RNA sequence.
Here, the general form of the problem of representative prediction is introduced.

%--------------------------------------------------------------------------------
%Problem 5
\begin{problem}[Representative prediction]\label{prob:representative}
%%%--------------------------------------------------------------------------------
In Problem~\ref{prob:1} with Assumption~\ref{as:our},
if the parameter space is represented as a product space
($\Theta = \prod_{k=1}^{K}{\Theta^{(k)}} = Y^{K}$)
and the distribution of $\theta \in \Theta$ has the form
$p(\theta|D)=\prod_{k=1}^K p^{(k)}(\theta^k|D)$,
predict a point $y$ in the predictive space $Y$.
\end{problem}

The generalized gain function for the representative prediction should be chosen such
that the prediction
reflects as much as each data entry.
Therefore, it is natural to use the following generalized gain function
that integrates the gain for each parameter.

%--------------------------------------------------------------------------------
%Definition 9
\begin{definition}[Homogeneous generalized gain function]\label{def:gain_additive}
%%%--------------------------------------------------------------------------------
In Problem~\ref{prob:representative},
a homogeneous generalized gain function is defined as
\begin{equation*}
G(\theta,y)=\sum_{k=1}^K G'(\theta^k,y),
\end{equation*}
where $G'$ is the gain function in Definition~\ref{def:GainFunction}.
\end{definition}

\begin{definition}[Representative estimator]\label{def:rep_est}
In Problem~\ref{prob:representative},
given a homogeneous generalized gain function
$G(\theta,y)=\sum_{k=1}^K G'(\theta^k,y)$,
the MEG estimator defined as
\begin{equation*}
\hat y^{(rMEG)} 
= \argmax_{y \in Y} \int G(\theta, y) p(\theta|D)d\theta
\end{equation*}
is referred to as the representative estimator.
\end{definition}

%--------------------------------------------------------------------------------
%Proposition 5
\begin{proposition}\label{prop:equiv_common_estimator}
%--------------------------------------------------------------------------------
The representative estimator is equivalent to
an MEG estimator with averaged probability
distribution on the predictive space $Y$:
\begin{align*}
p(y|D)=\frac{1}{K}\sum_{k} p^{(k)}(y|D)
\end{align*}
and a gain function $G'$.
\end{proposition}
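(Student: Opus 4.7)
The plan is a direct computation exploiting the product structure of the distribution and the additive structure of the gain function, then observing that multiplying an objective by a positive constant does not change the $\argmax$.

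First I would write out the expected gain of the representative estimator explicitly. Substituting the homogeneous form $G(\theta,y)=\sum_{k=1}^{K}G'(\theta^{k},y)$ and the product distribution $p(\theta|D)=\prod_{k=1}^{K}p^{(k)}(\theta^{k}|D)$ into the defining integral, and using linearity of integration to pull the finite sum outside, the objective becomes
\begin{equation*}
\int\!\Big(\sum_{k=1}^{K}G'(\theta^{k},y)\Big)\prod_{j=1}^{K}p^{(j)}(\theta^{j}|D)\,d\theta
=\sum_{k=1}^{K}\int G'(\theta^{k},y)\prod_{j=1}^{K}p^{(j)}(\theta^{j}|D)\,d\theta.
\end{equation*}

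Next I would integrate out, in the $k$-th summand, all the coordinates $\theta^{j}$ with $j\neq k$ using Fubini and the fact that each $p^{(j)}(\cdot|D)$ integrates to $1$. Renaming the remaining variable of integration to $\theta$ in each term, the sum collapses to $\sum_{k}\int G'(\theta,y)p^{(k)}(\theta|D)\,d\theta$. Exchanging sum and integral once more and factoring out $K$ yields
\begin{equation*}
\int G'(\theta,y)\sum_{k=1}^{K}p^{(k)}(\theta|D)\,d\theta
=K\int G'(\theta,y)\,\bar p(\theta|D)\,d\theta,
\end{equation*}
where $\bar p(\theta|D)=\frac{1}{K}\sum_{k}p^{(k)}(\theta|D)$ is exactly the averaged distribution in the statement.

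Finally, since $K>0$ is a constant independent of $y$, taking $\argmax_{y\in Y}$ of the left-hand side coincides with taking $\argmax_{y\in Y}$ of $\int G'(\theta,y)\bar p(\theta|D)\,d\theta$, which is by Definition~\ref{def:MEG} the MEG estimator associated with gain $G'$ and distribution $\bar p$. That completes the equivalence. There is no real obstacle here; the only thing to be careful about is that $\Theta=Y^{K}$ so each $\theta^{k}$ indeed ranges over the same space $Y$ on which $\bar p$ is defined, which is why the renaming step in the second paragraph is legitimate and the resulting $\bar p(y|D)$ is a bona fide probability distribution on $Y$.
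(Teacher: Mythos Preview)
Your proof is correct and follows essentially the same approach as the paper's own proof: expand the homogeneous gain and product distribution, marginalize out the unused coordinates in each summand, and observe that the remaining factor of $K$ is irrelevant for the $\argmax$. Your version is in fact more carefully spelled out than the paper's, which jumps from the expanded integral directly to the averaged-distribution form in a single step.
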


This proposition shows
that a representative prediction problem with any homogeneous generalized gain function
can be solved in a manner similar to Problem~\ref{prob:1} ($\Theta = Y$) with averaged probability distribution.
Therefore, the $\gamma$-centroid estimator for a representative prediction
satisfies Corollary~\ref{cor:gamma_est_for_representative}.

%--------------------------------------------------------------------------------
%Corollary 6
\begin{corollary}\label{cor:gamma_est_for_representative}
%--------------------------------------------------------------------------------
In Problem~\ref{prob:representative}, 
%in the case of a homogeneous gain function
%$G(\theta,y)=\sum_{k=1}^K G'(\theta^k,y)$,
the representative estimator
where $G'(\theta^k,y)$ is the gain function of the $\gamma$-centroid estimator on $Y$,
%the MEG estimator 
is
the $\gamma$-centroid estimator for the averaged probability distribution
and satisfies the same properties in
Theorem~\ref{thm:impl_g_est},
Theorem~\ref{theorem:g_centroid_th},
and Corollary~\ref{cor:WC_01}.
\end{corollary}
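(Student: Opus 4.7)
The plan is to derive the corollary as an almost immediate consequence of Proposition~\ref{prop:equiv_common_estimator}, which already does the substantive work.

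First, I would apply Proposition~\ref{prop:equiv_common_estimator} with $G'$ specialized to the pointwise gain function of Definition~\ref{def:GCE},
\begin{equation*}
F_i(\theta^k, y_i) = I(\theta^k_i = 0) I(y_i = 0) + \gamma I(\theta^k_i = 1) I(y_i = 1).
\end{equation*}
By the proposition, the representative estimator then coincides with the MEG estimator on $Y$ for the averaged distribution $\bar p(y|D) = \frac{1}{K} \sum_{k=1}^K p^{(k)}(y|D)$ together with this $G'$. By Definition~\ref{def:GCE}, such an MEG estimator is precisely the $\gamma$-centroid estimator associated with $\bar p$. This yields the first assertion of the corollary.

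Second, I would check that $\bar p$ is a legitimate probability distribution on $Y$, so that Assumption~\ref{as:cl} is satisfied with $p$ replaced by $\bar p$: a finite convex combination of probability distributions on $Y$ is again a probability distribution on $Y$. Consequently, Theorem~\ref{thm:impl_g_est}, Theorem~\ref{theorem:g_centroid_th}, and Corollary~\ref{cor:WC_01} apply verbatim to the $\gamma$-centroid estimator associated with $\bar p$. In particular, the marginal probabilities entering these results become
\begin{equation*}
\bar p_i \;=\; \bar p(\theta_i = 1 \mid D) \;=\; \frac{1}{K} \sum_{k=1}^K p^{(k)}(\theta_i = 1 \mid D),
\end{equation*}
so the threshold criterion $\bar p_i > 1/(\gamma+1)$ from Corollary~\ref{cor:WC_01} and the characterization from Theorem~\ref{theorem:g_centroid_th} can be rewritten directly in terms of averages of the per-source marginal probabilities, which is typically how the corollary is used in applications such as Problem~\ref{prob:rep_rna}.

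The main obstacle here is essentially nil: the content of the corollary is a transport of properties along the equivalence established by Proposition~\ref{prop:equiv_common_estimator}. The only subtlety worth pointing out is that the structural hypotheses of Theorem~\ref{theorem:g_centroid_th} (the coordinate-zeroing condition on $Y$) and of Corollary~\ref{cor:WC_01} (the intersection-of-$C_k$ form for $Y$) are statements about the predictive space $Y$ alone and do not involve the probability distribution, so they remain in force when $p$ is replaced by $\bar p$. Once this observation is made explicit, the three listed properties follow as direct instantiations of the corresponding results for the single-distribution setting.
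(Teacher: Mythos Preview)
Your proposal is correct and mirrors the paper's treatment: the paper does not give an explicit proof of this corollary, presenting it as an immediate consequence of Proposition~\ref{prop:equiv_common_estimator} together with the observation that the averaged distribution is a bona fide distribution on $Y$. Your added remark that the structural hypotheses on $Y$ in Theorem~\ref{theorem:g_centroid_th} and Corollary~\ref{cor:WC_01} are distribution-independent is a nice clarification, but the approach is the same.
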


%--------------------------------------------------------------------------------
\subsection{Estimators based on marginal probabilities}\label{sec:approx_estimator}
%--------------------------------------------------------------------------------
%
In the previous section,
we introduced Assumption~\ref{as:our},
where there is a parameter space $\Theta$ that can be different from the predictive space $Y$,
and we discussed the problem of representative prediction.
In this section,
we discuss another type of problems where $\Theta \ne Y$.
An example is presented below.

%--------------------------------------------------------------------------------
%Problem 6
\begin{problem}[Pairwise alignment using homologous sequences]\label{prob:align_homo}
%--------------------------------------------------------------------------------
Given a data set $D=\{x,x',h\}$,
where $x$ and $x'$ are two biological sequences to be aligned
and $h$ is a sequence that is homologous to both $x$ and $x'$,
predict a point $y$ in the predictive space $Y=\mathcal{A}(x,x')$
(the space of all possible alignments of $x$ and $x'$).
\end{problem}

The precise probabilistic model of this problem might include
the phylogenetic tree, ancestor sequences and their alignments.
Here, we assume a simpler situation
where the probability distribution of all possible multiple alignments of $D$ is given.
We predict the pairwise alignment of two specific sequences
according to the probability distribution of multiple alignments.
Although the parameter space $\Theta$, which is the space of all the possible multiple alignments,
can be parametrized using the parameters of the spaces of the alignments of all pairs that can be formed from the sequences in $D$,
$\Theta$ itself is not the product space of these spaces
because these pairwise alignments are not independent:
for $x ,x', h \in D$, $x_i$ must be aligned to $x'_j$
if both $x_i$ and $x'_j$ are aligned to $h_k$.
This type of problems can be generalized as follows.

%--------------------------------------------------------------------------------
%Problem 7
\begin{problem}[Prediction in a subspace of the parameter space]\label{prob:pred_subspace}
%--------------------------------------------------------------------------------
In Problem~\ref{prob:1} with Assumption~\ref{as:our},
if the parameter space $\Theta$ is represented as
$\Theta \subset \Theta'\times\Theta'^\perp$,
predict a point $y$ in the predictive space $Y=\Theta'$. 
\end{problem}

For the problem of representative prediction (Problem~\ref{prob:representative}),
generalized gain functions on $\Theta \times Y$ were introduced
(Definition~\ref{def:general_gain} and Definition~\ref{def:gain_additive}).
In contrast, in Problem~\ref{prob:pred_subspace},
the values of the parameters in $\Theta'^\perp$ are not important, and a point in $Y=\Theta'$ is predicted.
In Problem ~\ref{prob:align_homo}, for example,
the optimal multiple alignment of $D$,
the pairwise alignment of $h$ and $x$, and
the pairwise alignment of $h$ and $x'$
are irrelevant,
but instead we predict the pairwise alignment of $x$ and $x'$.
The MEG estimator
for the gain function defined on $\Theta' \times Y$
can be written as
\begin{align*}
\hat y^{(sMEG)} 
&= \argmax_{y \in Y} \int G(\theta', y) p(\theta'|D)d\theta',
\end{align*}
where $p(\theta'|D)$ on $\Theta'$ is the marginalized distribution
\begin{align}
p(\theta'|D)
= \int p(\theta|D)d\theta'^\perp
= \int p(\theta',\theta'^\perp|D)d\theta'^\perp. \label{eq:marginalized_distribution}
\end{align}
From the above MEG estimator, it might appear that Problem~\ref{prob:pred_subspace} is trivial.
However, it is not a simple task
to calculate the marginalized distribution in Eq.~(\ref{eq:marginalized_distribution})
in actual problems.

To reduce the computational cost,
we change Problem~\ref{prob:pred_subspace}
by introducing an approximated probability distribution
on the product space $\Theta'\times\Theta'^\perp$ a follows.
%--------------------------------------------------------------------------------
%Problem 8
\begin{problem}[Prediction in product space]\label{prob:pred_prod}
%--------------------------------------------------------------------------------
In Problem~\ref{prob:1} with Assumption~\ref{as:our},
if the parameter space $\Theta$ is represented as
$\Theta = \Theta'\times\Theta'^\perp$
and the probability distribution on $\Theta$ is defined as
\begin{align}
\bar{p}(\theta|D) = p(\theta'|D) p(\theta'^\perp|D), \label{Eq:13}
\end{align}
predict a point $y$ in the predictive space $Y=\Theta'$. 
\end{problem}

This factorization of spaces and probability distributions creates 
a number of inconsistencies in the parameter space with respect to the original Problem~\ref{prob:pred_subspace}.
In other words, the approximated distribution yields non-zero values
for a point that is not included in the original $\Theta$ (in Problem~\ref{prob:pred_subspace})
but in $\Theta'\times\Theta'^\perp$.
To reduce these inconsistencies,
a new type of gain function and a new estimator are introduced as follows.
%
%--------------------------------------------------------------------------------
\begin{definition}[$\gamma$-type pointwise gain function] \label{def:app_gamma_gain}
%--------------------------------------------------------------------------------
In Problem~\ref{prob:pred_subspace},
a $\gamma$-type pointwise gain function is defined as
$G(\theta,y)$ in Eq.~(\ref{eq:gain_function})
in Definition~\ref {def:pgain_function} having
\begin{align}
&F_i(\theta,y_i)=\gamma\cdot \delta_i(\theta') \cdot I(y_i=1) + (1-\delta_i(\theta')) I(y_i=0)\label{eq:pwg_approx},
\end{align}
where the value $\delta_i(\theta')\in[0,1]$ 
in the gain function should be designed to reduce the 
inconsistencies resulting from the factorization.
\end{definition}

%--------------------------------------------------------------------------------
%Definition 11
\begin{definition}[Approximated $\gamma$-type estimator]\label{est:approx}
%--------------------------------------------------------------------------------
In Problem~\ref{prob:pred_prod},
with a $\gamma$-type pointwise gain function with $F_i(\theta,y_i)$ in Eq.~(\ref{eq:pwg_approx})
on $\Theta\times Y$,
an approximated $\gamma$-type estimator is defined as an MEG estimator:
\begin{align*}
\hat y^{({\gamma}app)} = \argmax_{y\in Y} \int \biggl[\sum_{i=1}^n F_i(\theta,y_i)\biggr] \bar{p}(\theta|D) d\theta.
\end{align*}
\end{definition}

%--------------------------------------------------------------------------------
%Example 5
\begin{example}[PCT in pairwise alignment]\label{ex:pct_alignment}
%--------------------------------------------------------------------------------
We obtain the approximate estimator 
for Problem~\ref{prob:align_homo} with the following settings.
The parameter space is given as $\Theta=\Theta'\times\Theta'^\perp$, where
\begin{align*}
\Theta'=\mathcal{A}(x,x') (=Y) \mbox{ and } \Theta'^\perp=\mathcal{A}(x,h)\times\mathcal{A}(x',h)
\end{align*}
and the probability distribution on the parameter space $\Theta$ is given as
\begin{align*}
p(\theta\left|D\right.)
= p^{(a)}(\theta^{xx'}|x,x')p^{(a)}(\theta^{xh}|x,h)
p^{(a)}(\theta^{x'h}|x',h)
\end{align*}
for 
$\theta=(\theta^{xx'}, \theta^{xh}, \theta^{x'h})\in\Theta=\Theta'\times\Theta'^\perp$.
The $\delta_i(\theta')$ in Eq.~(\ref{eq:pwg_approx}) of the $\gamma$-type pointwise gain function is defined as
\begin{align*}
\delta_{ik}(\theta')= \frac{1}{2}\biggl\{I(\theta^{xx'}_{ik}=1) + \sum_v I(\theta_{iv}^{xh}=1)I(\theta_{kv}^{x'h}=1)\biggl\}.
\end{align*}
The approximated $\gamma$-type estimator for this $\gamma$-type pointwise gain function
is {employed in a part of} 
%closely related to the 
{\em probabilistic consistency transformation} (PCT) \cite{pmid15687296},
which is an important step toward accurate multiple alignments. 
See ``Pairwise alignment using homologous sequences'' (Section~\ref{sec:alignment_hom}) in Appendices for precise descriptions.
\end{example}

It is easily seen that Theorem~\ref{theorem:g_centroid_th} applies to the approximated $\gamma$-type estimator 
if $p_i$ in Theorem~\ref{theorem:g_centroid_th}
is changed as follows:
\begin{align*}
p_i=\int \delta_i(\theta')p(\theta'|D)d\theta'.
\end{align*}
Moreover, to confirm whether approximated $\gamma$-type estimator contains the consensus estimator for the same gain function,
it is only necessary to check if
\begin{align}
(\gamma+1)\left(\delta_i(\theta')+\delta_j(\theta')\right)-2 \le 0,\label{eq:condition_ae}
\end{align}
instead of Eq.~(\ref{eq:zyouken}) in Theorem~\ref{theorem:general}.
(Note that Theorem~\ref{theorem:general} can be extended to the {\em generalized} (pointwise) 
gain function: see Theorem~\ref{theorem:general_suppl}.)

%
%
%\begin{figure}[t]
%\centerline{
%\includegraphics[width=0.47\textwidth]{figs/}
%}
%\caption{\label{fig:MEG}
%Maximum expected gain (MEG) estimator
%}
%\end{figure}
%
%

\section{Discussion}

\subsection{Properties of the $\gamma$-centroid estimator}

%For estimation problems on a binary predictive space,
%the Bayesian maximum likelihood (ML) estimators (Definition~\ref{def:ML_estimator}),
%such as
%the minimum free energy structure for secondary structure prediction of RNA (Example~\ref{prob:rnas})
%and the pairwise alignment with the maximum score (Example~\ref{prob:align}),
%have been widely used.
In this paper,
general criteria for designing estimators are given
by the maximum expected gain (MEG) estimator (Definition~\ref{def:MEG}).
%which maximizes the value of the given gain function
%for the probability distribution on the predictive space.
The Bayesian ML estimator is an MEG estimator
with the delta function $\delta(y,\theta)$ as the gain function,
which means that only the probability for the ``perfect match'' is counted.
To overcome the drawbacks of the Bayesian ML estimator,
the centroid estimator \cite{centroid}
takes into account the overall ensemble of solutions and minimizes the expected Hamming loss.
Because the Hamming loss is not the standard evaluation measures for actual problems,
we have proposed an estimator of a more general type, the $\gamma$-centroid estimator (Definition~\ref{def:GCE}),
which includes the centroid estimator as a special case, $\gamma=1$.
The $\gamma$-centroid estimator is an MEG estimator that maximizes the expected value of $TN + \gamma TP$,
which generally covers all possible linear combination of the numbers of
true positives (TP), true negatives (TN), false positives (FP) and false negatives (FN) (Theorem~\ref{thm:impl_g_est}).
Since most of the evaluation measures of the prediction accuracy
are functions of these numbers \cite{pmid10871264},
the $\gamma$-centroid estimator is related to the principle of maximum expected accuracy (MEA).
It should be noted that MEG estimators have been proposed that are similar to the $\gamma$-centroid estimator for some specific problems,
for example, the alignment metric accuracy (AMA) estimator \cite{schwartz-2005} 
(see Section~\ref{sec:alignment} for the formal definition) %(see Section~E.1.2 (4)) 
for pairwise alignment (Problem~\ref{prob:align})
and
the MEA-based estimator \cite{pmid16873527} (see Appendices for the formal definition) %(see Section~E.2.2 (2))
for prediction of secondary structure of RNA (Problem~\ref{prob:rnas}).
However, these estimators display a {\em bias} with respect to the accuracy measures for the problem 
(see Eqs.~(\ref{eq:rel_ama}) and (\ref{eq:gf_mea})), and are therefore inappropriate from the viewpoint of the principles of MEA.
Moreover, these estimators cannot be introduced in a general setting, that is, Problem~\ref{prob:1}.
%For prediction of secondary structure of RNA,
It has been also shown that the $\gamma$-centroid estimator
outperforms the MEA-based estimator~\cite{pmid16873527} for various probability distributions 
in computational experiments~\cite{centroidfold-submit}.
(See ``Pairwise alignment of biological sequences (Problem~\ref{prob:align})'' and
``Secondary structure prediction of an RNA sequence (Problem~\ref{prob:rnas})'' 
in
%Sections~E.1.2 and E.2.2 
Appendices for relations between the $\gamma$-centroid estimator
and other estimators in Problems~\ref{prob:align} and \ref{prob:rnas}, respectively.)
%\vspace{-2cm}
%
%--------------------------------------------------------------------------------
\subsection{How to determine the parameter in $\gamma$-centroid estimator}
%--------------------------------------------------------------------------------
%
The parameter $\gamma$ in $\gamma$-centroid estimators adjusts 
sensitivity and PPV
(whose relation is tradeoff).
MCC or F-score is often used to obtain a balanced measure between sensitivity and PPV.
In RNA secondary structure predictions, it has been confirmed that 
the best $\gamma$ (with respect to MCC) 
of the $\gamma$-centroid estimator
with CONTRAfold model was larger than that with McCaskill model \cite{centroidfold-submit}.
It shows that the best $\gamma$ (with respect to a given accuracy measure) 
depends on not only estimation problems but also probabilistic models 
for predictive space. The parameter $\gamma$ trained by using reference structures
was therefore employed as the default parameter in CentroidFold \cite{centroidfold-submit}.
In order to select the parameter automatically (with respect to a given 
accuracy measure such as MCC and F-score), 
an approximation of maximizing expected MCC (or F-score) with the $\gamma$-centroid estimator 
can be utilized \cite{pmcc}.

%
%--------------------------------------------------------------------------------
\subsection{Accuracy measures and computational efficiency}
%--------------------------------------------------------------------------------
The reader might consider that it is possible to design estimators
that maximize the expected MCC or F-score
which balances sensitivity (SEN) and positive predictive value (PPV).
However,
%at least from a computational point of view,
it is much more difficult 
to compute such estimators in comparison with the $\gamma$-centroid estimator, as described below.

%In other words, there does not exist an efficient algorithm, 
%either based on dynamic programming or otherwise, for computing an estimator 
%which maximizes the expected MCC / F-score.
The expected value of the gain function of the $\gamma$-centroid estimator
can be written with marginalized probabilities as in Eq.~(\ref{eq:marginal_form}),
which can be efficiently computed by dynamic programming in many problems in bioinformatics,
for example, the forward-backward algorithm for alignment probabilities
and the McCaskill algorithm for base pairing probabilities.
Under a certain condition of the predictive space, which many problems in bioinformatics satisfy,
the $\gamma$-centroid estimator maximizes the sum of marginalized probabilities
greater than $1/(\gamma +1)$ (Theorem~\ref{theorem:g_centroid_th}).
Moreover, under an additional condition of the predictive space and the pointwise gain function,
which again many problems in bioinformatics satisfy,
the $\gamma$-centroid estimators for $\gamma \in [0,1]$ can be easily calculated
as the consensus estimators,
which collect in the binary predictive space
the components that have marginalized probabilities greater than $1/(\gamma+1)$ (Corollary~\ref{cor:WC_01}).
%The consensus estimator is the collection of independent solution for the gain function
%in each dimension in the binary predictive space (Definition~\ref{def:cons}).
%The consensus estimator is generally not contained in the predictive space,
%but a sufficient condition that the MEG estimator contain the consensus estimator
%has been shown in Theorem~\ref{theorem:general}.
For $\gamma >1$, there often exist dynamic programming algorithms
that can efficiently compute the $\gamma$-centroid estimators (Examples~\ref{ex:sed_gamma} \& \ref{ex:align_gamma}),
but there are certain problems, such as Problem~\ref{prob:pt},
which seem to have no efficient dynamic programming algorithms.

The gain function of the estimators that maximize MCC or F-score, and also SEN or PPV
contain {\em multiplication} and/or {\em division} of TP, TN, FP and FN,
while the gain function of the $\gamma$-centroid 
estimator contains only the weighted {\em sums} of these values (i.e., $\mbox{TN}+\gamma\cdot\mbox{TP}$).
Therefore, the expected gain is not written with marginalized probabilities as in Eq.~(\ref{eq:marginal_form}),
and it is difficult to design efficient computational algorithms for those estimators.
In predicting secondary structures of RNA sequences (Problem~\ref{prob:rnas}), for example,
it is necessary to enumerate all candidate secondary structures
or sample secondary structures for an approximation
in order to compute the expected MCC/F-score of a predicted secondary structure. % \cite{pmcc}.
%Our computational experiments show that the performance of the $\gamma$-centroid estimator
%for Problem~2 is comparable to that of the estimator which maximizes
%the expected MCC.
%
%In \cite{pmcc}, we have also proposed the pseudo-expected Matthews Correlation 
%Coefficient (MCC), which can be computed by employing marginalized
%probability vectors (e.g., base-paring probability matrices of RNA sequences) 
%instead of expected MCC.
%This suggests that pseudo-expected MCC can be computed efficiently.
%The authors have shown that the pseudo-expected MCC of a predicted secondary structure
%is a suitable application of its expected MCC.
%However, even when we use the pseudo-expected MCC, it appears that there exists
%no efficient algorithm for maximizing it.

%--------------------------------------------------------------------------------
\subsection{Probability distributions are not always defined on predictive space}
%--------------------------------------------------------------------------------

After discussing the standard estimation problems on a binary space
where the probability distribution is defined on the predictive space,
we have proposed a new category of estimation problems
where the probability distribution is defined on a parameter space that differs from the predictive space (see Assumption~\ref{as:our}).
Two types of estimators for such problems,
for example, estimators for representative prediction
and estimators based on marginalized distribution,
have been discussed.

Prediction of the common secondary structure from an alignment of RNA sequences
(Problem~\ref{prob:rep_rna})
%and alignment of two alignments
is an example of representative prediction.
The probability distribution is not implemented in the predictive space,
the space of common secondary structure,
but each RNA sequence has a probability distribution for its secondary structure.
Because the ``correct'' reference for the common secondary structure is not known in general,
direct evaluation of the estimated common secondary structure is difficult.
In the popular evaluation process for this problem,
the predicted common secondary structure is mapped to each RNA sequence
and compared to its reference structure. %(Figure~\ref{fig:eval_com_sec}).
Using the homogeneous generalized gain function exactly implements this evaluation process
and the MEG estimator for the averaged probability distribution
is equivalent to the MEG estimator for homogeneous generalized gain function.
Therefore, we can use the averaged base pairing probabilities according to the alignment
as the distribution for the common secondary structure
(see ``Common secondary structure prediction from a multiple 
alignment of RNA sequences'' (Section~\ref{sec:rna_sec_pred_m}) in Appendices for detailed discussion).
The representative estimator for Problem~\ref{prob:rep_rna} is implemented in software
\centroidalifold.
Another example of representative prediction is the
``alignment of alignments'' problem,
which is the fundamental element of progressive multiple alignment of biological sequences.
The evaluation process using the sum of pairs score
corresponds to using the homogeneous generalized gain function.
(see ``Alignment between two {\em alignments} 
of biological sequences'' (Section~\ref{sec:alignment_align}) in Appendices for detailed discussion).

Estimation problems of marginalized distributions
can be formalized as prediction in a subspace of the parameter space (Problem~\ref{prob:pred_subspace}).
If we can calculate the marginalized distribution on the predictive space
from the distribution on the parameter space,
all general theories apply to the predictive space and the marginalized distribution.
In actual problems, such as pairwise alignment using homologous sequences (Problem~\ref{prob:align_homo}), however,
computational cost for calculation of the marginalized probability is quite high.
We introduced the factorized probability distribution (Eq.~(\ref{Eq:13})) for approximation,
the $\gamma$-type pointwise gain function (Definition~\ref{def:app_gamma_gain}) to reduce the inconsistency caused by the factorization,
and the approximated $\gamma$-type estimator (Definition~\ref{est:approx}).
In Problem~\ref{prob:align_homo},
the probability consistency transformation (PCT),
which is widely used for multiple sequence alignment,
is interpreted as an approximated $\gamma$-type estimator.
Prediction of secondary structures of RNA sequences
on the basis of homologous sequences \cite{pmid19478007} (see Problem~\ref{prob:rna_sec_pred_hom} in Appendices)
and pairwise alignment for {\em structured} RNA sequences
are further examples of this type of problems.

%--------------------------------------------------------------------------------
\subsection{Application of $\gamma$-centroid estimator to cluster centroid}
%--------------------------------------------------------------------------------
In case probability distribution on the predictive space is multi-modal, 
$\gamma$-centroid estimators can provide unreliable solutions.
%There is a possibility that a prediction given by $\gamma$-centroid estimators 
%provides unreliable solution with respect to probability distribution on the predictive space
%when the distribution contains multiple clusters (in other words, the distribution is multi-modal).
For example, when there are two clusters of secondary structures 
in predictive spaces and those structures are exclusive,
the $\gamma$-centroid estimator might give
a ``chimeric'' secondary structure whose free energy is quite high. 
To avoid this situation, Ding {\it et al.} \cite{pmid16043502}
proposed a notion of the {\em cluster centroid}, which is
computed by the centroid estimator with a given cluster in a predictive space. 
We emphasize that the extension of cluster centroid by using $\gamma$-centroid estimator
is straightforward and would be useful. 
%,although it generally requires sampling algorithms to determine clusters.

% You may title this section "Methods" or "Models". 
% "Models" is not a valid title for PLoS ONE authors. However, PLoS ONE
% authors may use "Analysis" 
%\section{Materials and Methods}

%--------------------------------------------------------------------------------
\subsection{Conclusion}
%--------------------------------------------------------------------------------

In this work, 
we constructed a general framework for designing 
estimators for estimation problems in high-dimensional discrete (binary) 
spaces.
The theory is regarded as a generalization of the pioneering work
conducted by Carvalho and Lawrence, and is closely related to the concept of MEA.
Furthermore, we presented several applications of the proposed estimators (see Table~\ref{tab:sumary} for summary) 
and the underlying theory.
The concept presented in this paper is highly extendable and sheds new light on many
problems in bioinformatics.
In future research, we plan to investigate further applications of 
the $\gamma$-centroid and related estimators presented in this paper.

%--------------------------------------------------------------------------------
% Do NOT remove this, even if you are not including acknowledgments
\section*{Acknowledgments}
%--------------------------------------------------------------------------------

%This work was supported in part by the ``Functional RNA Project'' 
%of the New Energy Technology Development Organization (NEDO),
%and in part by a Grant-in-Aid for Scientific Research on Innovative Areas.
The authors are grateful to Drs.\ Luis E.\ Carvalho, 
Charles E.\ Lawrence, Kengo Sato, Toutai Mituyama and Martin C.\ Frith 
for fruitful discussions.
The authors also thank the members of the bioinformatics group for RNA
at the National Institute of Advanced Industrial Science and Technology (AIST)
for useful discussions.

{\small
{%
%--------------------------------------------------------------------------------
\begin{landscape}\label{sec:sum_app_bioinf}
%--------------------------------------------------------------------------------
%================================================================================
%
%\section{Summary of applications in bioinformatics}\label{sec:summ-appl-bioinf}
%
%================================================================================
\begin{table}[h]
\caption{\label{tab:sumary}
Summary of applications in bioinformatics
}
\begin{center}
{\small
\begin{tabularx}{225mm}{lXXXX}
\toprule
Alignment &
(1) Pairwise alignment of biological sequences &
(4) Pairwise alignment of two multiple alignments &
(6) Pairwise alignment using homologous sequences &
\\
\midrule
Section &
Section~\ref{sec:alignment} &
Section~\ref{sec:alignment_align} &
Section~\ref{sec:alignment_hom} &
\\
Data $D$ &
$\{x, x'\}$ & 
$\{A,A'\}$ & 
$\{x,x',H\}$ & 
\\
Predictive space $Y$ & 
$\mathcal{A}(x,x')$ &
$\mathcal{A}(A,A')$ & 
$\mathcal{A}(x,x')$ & 
\\
Parameter space $\Theta$ & 
$\mathcal{A}(x,x')$ &
$\prod_{x\in A}\prod_{x'\in A'}\mathcal{A}(x,x')$ &
$\mathcal{A}(x,x')\times\prod_{h\in H}[\mathcal{A}(x,h)\times\mathcal{A}(x',h)]$ &
\\
Probability $p(\theta|D)$ &
$p^{(a)}(\theta|x,x')$ &  
$\prod_{x\in A}\prod_{x'\in A'}p^{(a)}(\theta|x,x')$ &
$
p^{(a)}(\theta^{xx'}|x,x') \prod_{h\in H} 
[p^{(a)}(\theta^{xh}|x,h)
p^{(a)}(\theta^{x'h}|x',h)]$ &
\\
%% Gain function $F$ & 
%% *2 & 
%% *4 & 
%% *6 
%% \\
Type of estimator &
$\gamma$-centroid &
representative &
approximate &
\\
Software &
{\sc LAST} &
$-$ &
$-$ &
\\
Reference &
\cite{Frith_BMCB} &
\cite{pmid15687296}, This work &
\cite{pmid15687296}, This work 
\\
\midrule
RNA & 
(2) Secondary structure prediction of RNA  & 
(5) Common secondary structure prediction  &
(7) Secondary structure prediction using homologous sequences &
(8) Pairwise alignment of structured RNAs
\\
\midrule
Section &
Section~\ref{sec:rna_sec_pred} &
Section~\ref{sec:rna_sec_pred_m} &
Section~\ref{sec:rna_sec_pred_hom} &
Section~\ref{sec:rna_alignment} 
\\
Data $D$ & 
$\{x\}$ & 
$\{A\}$ &
$\{x,H\}$ & 
$\{x, x'\}$
\\
Predictive space $Y$ &
$\mathcal{S}(x)$ & 
$\mathcal{S}(A)$ & 
$\mathcal{S}(x)$ & 
$\mathcal{A}(x,x')$
\\
Parameter space $\Theta$ &
$\mathcal{S}(x)$ & 
$\prod_{x\in A} \mathcal{S}(x)$ & 
$\mathcal{S}(x)\times\prod_{h\in H}
\left[\mathcal{A}(x,h)\times\mathcal{S}(h)\right]$ &
$\mathcal{A}(x,x') \times \mathcal{S}(x) \times \mathcal{S}(x')$
\\
Probability $p(\theta|D)$ &
$p^{(s)}(\theta|x)$ &
$\prod_{x\in {A}}p^{(s)}(\theta|x)$ &
{$p^{(s)}\left(\theta^x|x\right) \times\prod_{h\in D} 
\left[p^{(a)}(\theta^{xh}|x,h)
p^{(s)}(\theta^{h}|h)\right]$} &
$p^{(a)}(\theta^{xx'}|x,x')p^{(s)}(\theta^{x}|x)p^{(s)}(\theta^{x'}|x')$
\\
%% Gain function & 
%% *1 & 
%% *3 & 
%% *5 &
%% *7 
%% \\
%%{\scriptsize marginalized probabilities} &
%% \\
Type of estimator &
$\gamma$-centroid &
representative &
approximate &
approximate
\\
Software &
{\sc CentroidFold} &
{\sc CentroidAlifold} &
{\sc CentroidHomfold} &
{\sc CentroidAlign}
\\
Reference &
\cite{centroidfold-submit} &
\cite{centroidfold-submit,CentroidAlifold} &
\cite{pmid19478007} &
\cite{pmid19808876}
\\
\midrule
Phylogenetic tree & (3) Estimation of phylogenetic tree \\
\midrule
Section & Section~\ref{sec:pt}
\\
Data $D$ & $S$
\\
Parameter space $\Theta$ & $\mathcal{T}(S)$
\\
Predictive space $Y$ & $\mathcal{T}(S)$
\\
Probability $p(\theta|D)$ & $p^{(t)}(\theta|S)$
\\
Type of estimator & $\gamma$-centroid
\\
Reference & This work
\\
\bottomrule
\end{tabularx}
}
\end{center}
{\footnotesize
The top row includes problems about RNA secondary structure predictions and
the {\color{black} middle} row includes problems about alignment of biological sequences.
Note that the estimators in the same column corresponds to each other.\\
%% *5 
%% $
%% \delta_{ij}=
%% \alpha I(\theta^{x}_{ij}=1) +
%% \frac{1-\alpha}{|D|}\sum_{x'\in D}
%% \sum_{k<l} I(\theta^{xx'}_{ik}=1) I(\theta^{xx'}_{jl}=1) I(\theta^{x'}_{kl}=1)
%% $\\
%% *7
%% $
%% a
%% $
}
\end{table}

\end{landscape}

}

\appendix

  \renewcommand{\thefigure}{S\arabic{figure}}
  \renewcommand{\thetable}{S\arabic{table}}
  \renewcommand{\theequation}{S\arabic{equation}}
  \setcounter{figure}{0}
  \setcounter{table}{0}
  \setcounter{equation}{0}
  \setcounter{footnote}{0}

%\input{header-suppl.tex}

%\makeatletter       % changes [1] to 1. in bibliography
%\renewcommand{\@biblabel}[1]{S#1.}
%\renewcommand{\@cite}[1]{[S#1]}
%\makeatother

\def\centroidfold{\textsc{CentroidFold}}
\def\centroidalifold{\textsc{CentroidAlifold}}
\def\centroidhomfold{\textsc{CentroidHomfold}}
\def\centroidalign{\textsc{CentroidAlign}}
\def\contrafold{\textsc{CONTRAfold}}
\def\rnaalifold{\textsc{RNAalifold}}
\def\rnaalifoldcentroid{\textsc{RNAalipfold-Centroid}}
\def\petfold{\textsc{PETfold}}
\def\mccaskill{\textsc{McCaskill}}
\def\mccaskillmea{\textsc{McCaskill-MEA}}
\def\pfold{\textsc{Pfold}}
\def\sfold{\textsc{Sfold}}
\def\rnaalipfold{\textsc{RNAalipfold}}
\def\probcons{\textsc{ProbCons}}
\def\mafft{\textsc{MAFFT}}
\def\clustalw{\textsc{ClustalW}}
\def\mxscarna{\textsc{MXSCARNA}}
\def\mfold{\textsc{Mfold}}
\def\rnafold{\textsc{RNAfold}}
\def\rnastructure{\textsc{RNAstructure}}
\def\simfold{\textsc{Simfold}}
\def\rfam{\textsc{Rfam}}
\def\last{\textsc{LAST}}
\def\blast{\textsc{BLAST}}
\def\rnastrand{\textsc{RNA STRAND}}

%% OPTIONAL MACRO DEFINITIONS
%\newtheorem{assumption}{Assumption}
%\newtheorem{definition}{Definition}
%\newtheorem{theorem}{Theorem}
\newtheorem{remark}{Remark}
\newtheorem{property}{Property}
\newtheorem{lemma}{Lemma}
\newtheorem{proof}{(proof)}
\newtheorem{eproof}{(simple proof)}
\newtheorem{formulate}{Formulation}
\renewcommand{\theproof}{}
\renewcommand{\theeproof}{}

\def\enstate{\hfill\ensuremath{\Box}}
\def\enproof{\hfill\ensuremath{\blacksquare}}
\def\argmax{\mathop{\mathrm{arg\ max}}}
\def\R{\ensuremath{\mathbb{R}}}
\def\Z{\ensuremath{\mathbb{Z}}}
\def\N{\ensuremath{\mathbb{N}}}

{\color{black}
\section{Appendices}
}
%--------------------------------------------------------------------------------
%\appendix
%--------------------------------------------------------------------------------
%
%% This {\color{black} appendix}
%% %supplementary information 
%% is organized as follows:
%% In Section~\ref{sec:discrete_space}, we define several 
%% high-dimensional discrete (binary) spaces in bioinformatics, 
%% and in Section~\ref{sec:prob}, we describe the probability distributions on 
%% these spaces.
%% In Section~\ref{sec:acc_mes}, we introduce several evaluation (accuracy) measures.
%% In Section~\ref{sec:figs}, we provide schematic diagrams of the MEG estimator (Definition~\ref{def:MEG}),
%% the representative 
%% estimator (Definition~\ref{def:rep_est}) and the approximated $\gamma$-type estimator (Definition~\ref{est:approx}).
%% In Section~\ref{sec:app},
%% we explain several applications of our general theories and estimators 
%% (described in the main paper)
%% in bioinformatics. The applications are summarized in Table~\ref{tab:sumary}.
%% In Section~\ref{sec:proof}, we give the proofs of the theorems, propositions and corollary
%% in the main paper.

%% All of the equations, figures, tables, lemmas, properties and problems that only 
%% appear in this appendix 
%% %supplementary paper 
%% are numbered with the prefix ``S'' 
%% (e.g., Figure~S1, Proposition~S1 and Eq.~(S2)).
%% Those that are numbered 
%% without an ``S'' appear in the main paper.
%
%--------------------------------------------------------------------------------
\subsection{Discrete (binary) spaces in bioinformatics}\label{sec:discrete_space}
%--------------------------------------------------------------------------------
%
In this section, we summarize three discrete spaces that appear in this paper.
These discrete spaces are often used in the definition of the predictive spaces 
and the parameter spaces.
It should be noted that every discrete space described below is identical in form
to Eq.~(\ref{eq:constrain_Y}).
%
%--------------------------------------------------------------------------------
\subsubsection{The space of alignments of two biological sequences: $\mathcal{A}(x,x')$}\label{sec:Y_a}
%--------------------------------------------------------------------------------
%
We define a space of the alignments of two biological (DNA, RNA and protein) sequences
$x$ and $x'$, denoted by $\mathcal{A}(x,x')$, as follows.
We set $I^{(0)}=\left\{(i,k)|1\le i\le |x|, 1\le k\le |x'|\right\}$ as
a base index set, and 
a binary variable $\theta_{ik}$ for $(i,k)\in I^{(0)}$ is defined by
\begin{equation*}
\theta_{ik} = \left\{
\begin{array}{ll}
1 & \mbox{positions $i$ in $x$ and $k$ in $x'$ are aligned}\\
0 & \mbox{positions $i$ in $x$ and $k$ in $x'$ are not aligned}
\end{array}
\right..
\end{equation*}
Then $\mathcal{A}(x,x')$ is a subset of 
$B:=\left\{\theta=\left\{\theta_{ik}\right\}_{(i,k)\in I^{(0)}}\Big| \theta_{ik} \in \{0,1\}\right\}$
and is defined by
\begin{align*}
\mathcal{A}(x,x') = 
\bigcap_{I \in \mathcal{I}} C(I),\quad
C(I)=\biggl\{x' \in B \bigg| \sum_{(i,k)\in I} \theta_{ik} \le 1 \biggr\}.\label{eq:Y_alignment}
\end{align*}
Here $\mathcal{I}$ is a set of index-sets:
\begin{align*}
\mathcal{I} = \left\{ I \left| 
I=I_{i}^{(1)}\ (1\le i \le |x|) \mbox{ or }
I=I_{k}^{(2)}\ (1\le k \le |x'|) \mbox{ or }
I=I_{ikjl}^{(3)}\ (1 \le i<j \le {|x|}, 1\le l<k\le {|x'|})
\right.\right\}
\end{align*}
where
\begin{align*}
I_{i}^{(1)} = \left\{(i,k)|1\le k \le |x'|\right\},
I_{k}^{(2)} = \left\{(i,k)|1\le i \le |x|\right\} \mbox{ and }
I_{ikjl}^{(3)}=\left\{(i,k),(j,l)\right\}.
\end{align*}
The inclusion $y\in C({I_{i}^{(1)}})$ means that position $i$ in the sequence $x$ 
aligns with {\em at most one} position in the sequence $x'$ in the alignment $y$,
$y\in C({I_{j}^{(2)}})$ means that position $j$ in the sequence $x'$ 
aligns with {\em at most one} position in the sequence $x$ and
$y\in C({I_{ikjl}^{(3)}})$ means the alignment $(i,k)$ and $(j,l)$ 
is {\em not crossing}.
Note that $\mathcal{A}(x,x')$ depends on only the length of two sequences, namely, 
$|x|$ and $|x'|$.
%
%--------------------------------------------------------------------------------
%% \begin{remark}
%% The alignment spaces defined above consider only aligned-bases 
%% and not consider any gaps.~\enstate
%% \end{remark}
%--------------------------------------------------------------------------------

%--------------------------------------------------------------------------------
\subsubsection{The space of secondary structures of RNA: $\mathcal{S}(x)$}\label{sec:Y_s}
%--------------------------------------------------------------------------------
%
We define a space of the secondary structures of an RNA sequence $x$, denoted by
$\mathcal{S}(x)$, as follows.
We set $I^{(0)}=\left\{(i,j)| 1\le i<j\le |x|\right\}$ as a
base index set, and 
a binary variable $\theta_{ij}$ for $(i,j)\in I^{(0)}$ is defined by
\begin{equation*}
\theta_{ij} = \left\{
\begin{array}{ll}
1 & \mbox{the positions $i$ of $x$ and $j$ of $x$ form a base pair}\\
0 & \mbox{the positions $i$ of $x$ and $j$ of $x$ do not form a base pair}.
\end{array}
\right.
\end{equation*}
Then $\mathcal{S}(x)$ is a subset of
$B:=\left\{\theta=\left\{\theta_{ij}\right\}_{(i,j)\in I^{(0)}}\Big| \theta_{ij} 
\in \{0,1\}\right\}$
and is defined by
\begin{equation*}
\mathcal{S}(x)= \bigcap_{I \in \mathcal{I}} C(I),\quad
C(I)=\biggl\{\theta \in B \bigg| \sum_{(i,j)\in I} \theta_{ij} \le 1 \biggr\}.
\end{equation*}
Here $\mathcal{I}$ is a set of index-sets 
\begin{align*}
\mathcal{I} = \left\{ I \left| I=I_i^{(1)}
\ (1\le i \le |x|) \mbox{ or } I=I_{ijkl}^{(2)}\ (1\le i<k<j<l\le |x|)\right.\right\}
\end{align*}
where
\begin{align*}
I_{i}^{(1)} = \left\{(i,j)| i<j\le|x|\right\}\cup\left\{(j,i)| 1\le j<i\right\} \mbox{ and }
I_{ijkl}^{(2)}=\left\{(i,j),(k,l)\right\}.
\end{align*}
The inclusion $y\in C({I_{i}^{(1)}})$ means that position $i$ in the sequence $x$ 
belongs to {\em at most one} base-pair in a secondary structure $y$, 
and 
$y\in C(I_{ijkl}^{(2)})$ means two base-pairs whose relation
is {\em pseudo-knot} are not allowed in $y$.
Note that $\mathcal{S}(x)$ depends on only the length of the RNA sequence $x$, that is, $|x|$.
%
%
%--------------------------------------------------------------------------------
\subsubsection{The space of phylogenetic trees: $\mathcal{T}(S)$}\label{sec:Y_t}
%--------------------------------------------------------------------------------
%\input{../space-t.tex}
We define a space of phylogenetic trees (unrooted and multi-branch trees) 
of a set of $S=\{1,\cdots,n\}$, 
denoted by $\mathcal{T}(S)$, as follows.
We set $I^{(0)}=\left\{X|X\subset S^2, |X|<n/2 \vee (|X|=n/2 \wedge 1 \in X)\right\}$,
where
$S^2=\left\{X|X\subset S, |X|>1 \wedge |X|<n-1\right\}$, as a base index set and
we define binary variables $\theta_X$ for $X \in I^{(0)}$ by
\begin{equation*}
\theta_{X} = \left\{
\begin{array}{ll}
1 & \mbox{if $S$ can be partitioned into $X$ and $S\setminus X$ by cutting an edge in the tree}\\
0 & \mbox{otherwise}
\end{array}
\right..
\end{equation*}
Then $\mathcal{T}(S)$ is a subset of
$B:=\left\{\theta=\left\{\theta_{X}\right\}_{X\in I^{(0)}}\Big| \theta_X \in \{0,1\}\right\}$
and is defined by
\begin{equation*}
\mathcal{T}(S)= \bigcap_{I \in \mathcal{I}} C(I),\quad
C(I)=\biggl\{\theta \in B \bigg| \sum_{X\in I} \theta_{X} \le 1 \biggr\} 
\end{equation*}
where
$\mathcal{I} = \left\{ I=\{X,Y\} \left| X\cap Y\notin \left\{\emptyset,X,Y\right\} \right.\right\}$.
Note that $\mathcal{T}(S)$ depends on only the number of elements in $S$. 
We now give several properties of $\mathcal{T}(S)$ that follow directly from
the definition.
\begin{lemma}
The number of elements in $\mathcal{T}(S)$ (i.e. $|I^{(0)}|$) is equal to $2^{n-1}-n-1$ where $n=|S|$.
\end{lemma}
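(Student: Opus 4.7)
The plan is to reinterpret $I^{(0)}$ as a set of representatives, one from each unordered complementary pair $\{X,S\setminus X\}$ of subsets of $S$, and then to count those pairs directly.

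First I would unpack the definition: every element of $I^{(0)}$ is a subset $X\subseteq S$ with $2\le |X|\le n-2$ (that is, $X\in S^2$). For any such $X$, the complement $S\setminus X$ is also in $S^2$, so the sets $S^2$ come naturally in unordered pairs $\{X,S\setminus X\}$. Since $S\neq\emptyset$, we always have $X\neq S\setminus X$, so each pair consists of two distinct subsets. The next step is to show that the size-based conditions in the definition of $I^{(0)}$ pick out exactly one representative from each such pair. If $|X|\neq n/2$, then $|X|$ and $|S\setminus X|=n-|X|$ lie on opposite sides of $n/2$, so exactly one of the two sets satisfies $|\cdot|<n/2$. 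If $n$ is even and $|X|=n/2$, then both $X$ and $S\setminus X$ have the same size, and the tie-breaking condition ``$1\in X$'' selects the unique one of them that contains the element $1\in S$. So indeed $|I^{(0)}|$ equals the number of unordered complementary pairs $\{X,S\setminus X\}$ with $X\in S^2$.

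Next I would count those pairs. The set $S^2$ excludes exactly the subsets of size $0$, $1$, $n-1$, and $n$, giving
\begin{equation*}
|S^2|=2^n-\binom{n}{0}-\binom{n}{1}-\binom{n}{n-1}-\binom{n}{n}=2^n-2n-2.
\end{equation*}
Since complementation is a fixed-point-free involution on $S^2$, the elements split into $(2^n-2n-2)/2=2^{n-1}-n-1$ disjoint pairs, and combining with the previous paragraph gives $|I^{(0)}|=2^{n-1}-n-1$, as claimed.

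There is no substantive obstacle here; the only subtle point is verifying the tie-breaking condition in the case $n$ even, $|X|=n/2$, which I would handle by the one-line observation that exactly one of the complementary sets $X$ and $S\setminus X$ contains the fixed element $1$. The rest is a direct binomial count, so the proof is essentially a combinatorial bookkeeping argument.
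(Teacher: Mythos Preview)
Your argument is correct: you correctly identify $I^{(0)}$ as a transversal for the complementation involution on $S^2$, verify that the size/tie-breaking clauses select exactly one representative from each pair $\{X,S\setminus X\}$, and then count $|S^2|=2^n-2n-2$ to obtain $|I^{(0)}|=2^{n-1}-n-1$. The only implicit assumption is $n\ge 3$, so that the sizes $0,1,n-1,n$ you exclude are genuinely four distinct values; this is harmless since phylogenetic trees on fewer than three taxa are degenerate and the formula itself is non-positive for $n\le 2$.

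The paper does not give a proof of this lemma; it merely states it as one of ``several properties of $\mathcal{T}(S)$ that follow directly from the definition.'' Your write-up is exactly the kind of direct combinatorial unpacking the authors had in mind, so there is nothing to compare --- your proposal simply supplies the omitted details.
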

\begin{lemma}
The {\em topological distance} \cite{citeulike:515472} between two phylogenetic trees $T_1$ and $T_2$ 
in $\mathcal{T}(S)$ 
is 
\begin{align*}
d(T_1,T_2)=\sum_{X\in I^{(0)}} I(\theta_X(T_1) \ne \theta_X(T_2))
\end{align*}
where $I(\cdot)$ is the indicator function.
\end{lemma}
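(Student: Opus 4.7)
The plan is to reduce the claim to the standard fact that the topological (Robinson--Foulds) distance is the size of the symmetric difference of the sets of bipartitions induced by the two trees. The content of the lemma is essentially bookkeeping: verify that the indexing convention for $I^{(0)}$ gives exactly one label per non-trivial bipartition, and that $\theta_X$ records membership in the tree's bipartition set.

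First, I would recall that an unrooted tree $T$ with leaves labelled by $S$ has the property that cutting any internal edge partitions $S$ into two non-empty parts, each of size at least $2$, and that this induces a map from internal edges to non-trivial bipartitions $\{A, S\setminus A\}$ of $S$. The classical definition of the topological distance is the number of internal edges of $T_1$ inducing a bipartition not induced by any internal edge of $T_2$, plus the symmetric count for $T_2$. Equivalently, writing $B(T)$ for the set of non-trivial bipartitions induced by $T$, the distance equals $|B(T_1)\triangle B(T_2)|$.

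Second, I would check that the index set $I^{(0)}$ is in bijection with the set of non-trivial bipartitions of $S$. The inner definition $S^2=\{X\subset S : 1<|X|<n-1\}$ discards the empty set, singletons, their complements, and $S$ itself, i.e., exactly the subsets corresponding to trivial bipartitions. The outer condition $|X|<n/2 \vee (|X|=n/2 \wedge 1\in X)$ selects, from each unordered pair $\{X,S\setminus X\}$ with $2\le |X|\le n-2$, the unique representative: if $|X|\ne n/2$ the smaller of the two is chosen, and if $n$ is even and $|X|=n/2$ the one containing the distinguished element $1$ is chosen. Thus $X\mapsto \{X,S\setminus X\}$ is a bijection from $I^{(0)}$ onto the set of non-trivial bipartitions.

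Third, unpacking the definition of $\theta_X(T)$: it equals $1$ precisely when $\{X,S\setminus X\}\in B(T)$, and is independent of which representative of the bipartition was chosen. Therefore $\theta_X(T_1)\ne \theta_X(T_2)$ if and only if $\{X,S\setminus X\}\in B(T_1)\triangle B(T_2)$. Summing $I(\theta_X(T_1)\ne \theta_X(T_2))$ over $X\in I^{(0)}$ counts each element of $B(T_1)\triangle B(T_2)$ exactly once, yielding $|B(T_1)\triangle B(T_2)|=d(T_1,T_2)$.

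The main obstacle, such as it is, is purely combinatorial: one must confirm that the condition defining $I^{(0)}$ produces a single canonical representative of every non-trivial bipartition (the even-$n$ case with $|X|=n/2$ is the only subtlety) and that trivial bipartitions are correctly excluded on both sides of the equality---on the left because they are shared by every tree and hence do not contribute to $d$, and on the right because they are absent from $I^{(0)}$ by construction.
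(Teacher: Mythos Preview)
Your proposal is correct and matches the paper's intent: the paper states this lemma without proof, remarking only that it ``follows directly from the definition'' together with the cited Robinson--Foulds notion of topological distance. Your argument simply unpacks that remark---the bijection between $I^{(0)}$ and non-trivial bipartitions, and the identification of the Hamming distance on the $\theta$-vectors with the symmetric difference of bipartition sets---which is exactly the verification the paper leaves to the reader.
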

\begin{remark}
If we assume the additional condition 
$\sum_X \theta_X = ((4n-6)-2n)/2=n-3$, then $\mathcal{T}(S)$ is a set of {\em binary} trees.
\end{remark}

%--------------------------------------------------------------------------------
\subsection{Probability distributions on discrete spaces}\label{sec:prob}
%--------------------------------------------------------------------------------
%
We use three probability distributions in this paper.
%
%--------------------------------------------------------------------------------
\subsubsection{Probability distributions $p^{(a)}(\theta|x,x')$ on $\mathcal{A}(x,x')$}\label{sec:P_a}
%--------------------------------------------------------------------------------
For two protein sequences $x$ and $x'$, a probability distribution $p^{(a)}(\theta|x,x')$ 
over the space $\mathcal{A}(x,x')$, which is the space of pairwise alignments of
$x$ and $x'$ defined in the previous section, is given by the following models.
\begin{enumerate}
\item Miyazawa model~\cite{pmid8771180} and Probalign model~\cite{pmid16954142}:
\begin{align*}
p^{(a)}(\theta|x,x')=\frac{1}{Z(T)} \exp \left(\frac{S(\theta)}{T}\right)
\end{align*}
where $S(\theta)$ is the score of an alignment $\theta$
under the given scoring matrix 
(We define $S(\theta)=\sum_{\theta_{ij}=1} s(x_i,x_j) - \mbox{(penalty for gaps)}$
where $s(x_i,x_j)$ is a score for the correspondence of bases $x_i$
and $x_j$),
$T$ is the thermodynamic temperature
and $Z(T)$ is the normalization constant, which is known as a {\em partition function}.
\item Pair Hidden Markov Model (pair HMM)~\cite{pmid15687296}:
\begin{align*}
p^{(a)}(\theta|x,x') = \pi(s_1)\biggl(\prod_{i=1}^{n-1} \alpha(s_i\to s_{i+1})\biggr)
\biggl(\prod_{i=1}^n\beta(o_i|s_i)\biggr)
\end{align*}
where $\pi(s)$ is the initial probability of starting in state $s$,
$\alpha(s_i\to s_{i+1})$ is the transition probability from $s_i$ to $s_{i+1}$
and $\beta(o_i|s_i)$ is the omission probability for either a single letter or aligner residue 
pair $o_i$ in the state $s_i$.
\item CONTRAlign (pair CRF) model~\cite{DBLP:conf/recomb/DoGB06}:
\begin{align*}
p^{(a)}(\theta|x,x')=\frac{\exp(w^t f(\theta,x,x'))}{\sum_{\theta'\in \Omega(x,x')}\exp(w^t f(\theta',x,x'))}
\end{align*}
where $w$ is a parameter vector and 
$f(\theta, x,x')$ is a vector of features that indicates the number of times
each parameter appears, 
$\Omega(x,x')$ denotes the set of all possible alignments of $x$ and $x'$.
We do not describe the feature vectors and refer readers to 
the original paper~\cite{DBLP:conf/recomb/DoGB06}.
\end{enumerate}
\begin{remark}
Strictly speaking, the alignment space in the pair hidden Markov model 
and the CONTRAlign model consider the patterns of gaps.
In these cases, we obtain the probability space on $\mathcal{A}(x,x')$ 
by a {marginalization}.
\end{remark}
%
%--------------------------------------------------------------------------------
\subsubsection{Probability distributions $p^{(s)}(\theta|x)$ on $\mathcal{S}(x)$}\label{sec:P_s}
%--------------------------------------------------------------------------------
For an RNA sequence $x$, a probability distribution $p^{(s)}(\theta|x)$ 
over $\mathcal{S}(x)$, 
which is the space of secondary structures of $x$ defined in the previous section %Appendix~\ref{sec:Y_s},
is given by the following models.
\begin{enumerate}
\item McCaskill model~\cite{pmid1695107}: This model is  based on the energy models for secondary structures
of RNA sequences  and is defined by
\begin{align*}
p^{(s)}(\theta|x)= \frac{1}{Z(x)} \exp \left(-\frac{E(\theta,x)}{kT}\right) \mbox{ where }
Z(x) = \sum_{\theta\in\mathcal{S}(x)} \exp \left(-\frac{E(\theta,x)}{kT}\right)
\end{align*}  
where $E(\theta,x)$ denotes the energy of the secondary structure that is 
computed using the energy parameters of Turner Lab~\cite{pmid10329189}, $k$ and $T$ are constants and
$Z(x)$ is the normalization term known as the {\em partition function}.
\item Stochastic Context free grammars (SCFGs) model~\cite{pmid15180907}:
\begin{align*}
p^{(s)}(\theta|x)=\frac{\sum_{\sigma\in\Omega(\theta)} p(x,\sigma)}{\sum_{\sigma\in\Omega'(x)} p(x,\sigma)}
\end{align*}
where $p(x,\sigma)$ is the joint probability of generating the parse $\sigma$
and is given by the product of the transition and emission probabilities of
the SCFG model and
$\Omega'(x)$ is all parses of $x$, 
$\Omega(\theta)$ is all parses for a given $\theta$.
\item CONTRAfold (CRFs; conditional random fields) model~\cite{pmid16873527}:
This model gives us the best performance on secondary structure prediction
although it is not based on the energy model. 
\begin{align*}
p^{(s)}(\theta|x) = \frac {\sum_{\sigma \in \Omega(\theta)} \exp (w^t f(x, \sigma))}
{\sum_{\sigma \in \Omega'(x)} \exp (w^t f(x, \sigma))} %\label{eq:CLLM}
\end{align*}
where $w \in \R^n$, $f(x, \sigma) \in \R^n$ is the feature vector for $x$ in parse  $\sigma$, $\Omega'(x)$ is all parses of $x$, 
$\Omega(\theta)$ is all parses for a given $\theta$.
\end{enumerate}
%

%--------------------------------------------------------------------------------
\subsubsection{Probability distributions $p^{(t)}(\theta|S)$ on $\mathcal{T}(S)$}\label{sec:P_t}
%--------------------------------------------------------------------------------
%
A probability distribution $p^{(t)}(\theta|S)$ on $\mathcal{T}(S)$ is given 
by probabilistic models of phylogenetic trees, for example,~\cite{pmid12912839,pmid11524383}.
Those models give a probability distribution on binary trees and we should marginalize
these distributions for multi-branch trees.
%
%--------------------------------------------------------------------------------
\subsection{Evaluation measures defined using TP, TN, FP and FN}\label{sec:acc_mes}
%--------------------------------------------------------------------------------
%
There are several evaluation measures of a prediction in estimation problems 
for which we have a reference (correct) prediction
in Problem~\ref{prob:1}.
The Sensitivity (SEN), Positive Predictive Value (PPV),
Matthew's correlation coefficient (MCC) and F-score
for a prediction are defined as follows.
\begin{align*}
&\mbox{SEN} = \frac{\mbox{TP}}{\mbox{TP}+\mbox{FN}},\\
&\mbox{PPV} = \frac{\mbox{TP}}{\mbox{TP}+\mbox{FP}},\\
&\mbox{MCC}=\frac{\mbox{TP}\times \mbox{TN}-\mbox{FP}\times \mbox{FN}}
{\sqrt{(\mbox{TP}+\mbox{FP})(\mbox{TP}+\mbox{FN})(\mbox{TN}+\mbox{FP})(\mbox{TN}+\mbox{FN})}},\\
&\mbox{F-score}=\frac{2\cdot\mbox{TP}}{2\cdot\mbox{TP}+\mbox{FP}+\mbox{FN}}
%&\mbox{F-score}=\frac{2\times \mbox{SEN} \times \mbox{PPV}}{\mbox{SEN}+\mbox{PPV}}
\end{align*}
where TP (the number of true positive), TN (the number of true negative), 
FP (the number of false positive) and FN (the number of false negative) 
are defined by 
\begin{align}
&\mbox{TP}=\mbox{TP}(\theta,y)=\sum_{i}I(y_{i}=1)I(\theta_{i}=1)\label{eq:tp},\\
&\mbox{TN}=\mbox{TN}(\theta,y)=\sum_{i}I(y_{i}=0)I(\theta_{i}=0)\label{eq:tn},\\
&\mbox{FP}=\mbox{FP}(\theta,y)=\sum_{i}I(y_{i}=1)I(\theta_{i}=0)\label{eq:fp},\\
&\mbox{FN}=\mbox{FN}(\theta,y)=\sum_{i}I(y_{i}=0)I(\theta_{i}=1).\label{eq:fn}
\end{align}
%For RNA secondary structure prediction (Problem~1),
%SEN, PPV and MCC with respect to
%the representation of secondary structures in Section~\ref{sec:Y_s} are widely used,
%because accurate prediction of the base-pairs in a secondary structure is biologically 
%important.
It should be noted that these measures can be written as a function of TP, TN, FP and FN.
See \cite{pmid10871264} for other evaluation measures.
%
%--------------------------------------------------------------------------------
\subsection{Schematic diagrams of representative and approximated $\gamma$-type estimators}\label{sec:figs}
%--------------------------------------------------------------------------------
%
The schematic diagrams of the MEG estimator (Definition~\ref{def:MEG}{}),
the representative estimator (Definition~\ref{def:rep_est}{})
and the approximated $\gamma$-type estimator (Definition~\ref{est:approx}{})
are shown in Figure~\ref{fig:meg_est}, Figure~\ref{fig:rep_est} and Figure~\ref{fig:approx_est}, respectively.
\begin{figure}[th]
\centerline{
\includegraphics[width=0.6\textwidth]{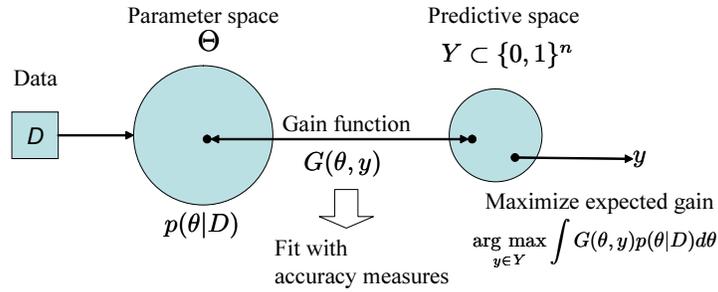} %Fig 1
}
\caption{\label{fig:meg_est}
{\bf Schematic diagram of the MEG estimator (Definition~\ref{def:MEG}).}
%The parameter space $\Theta$ is a product space and is different from the predictive space $Y$.
}
\end{figure}
\begin{figure}[th]
\centerline{
\includegraphics[width=0.7\textwidth]{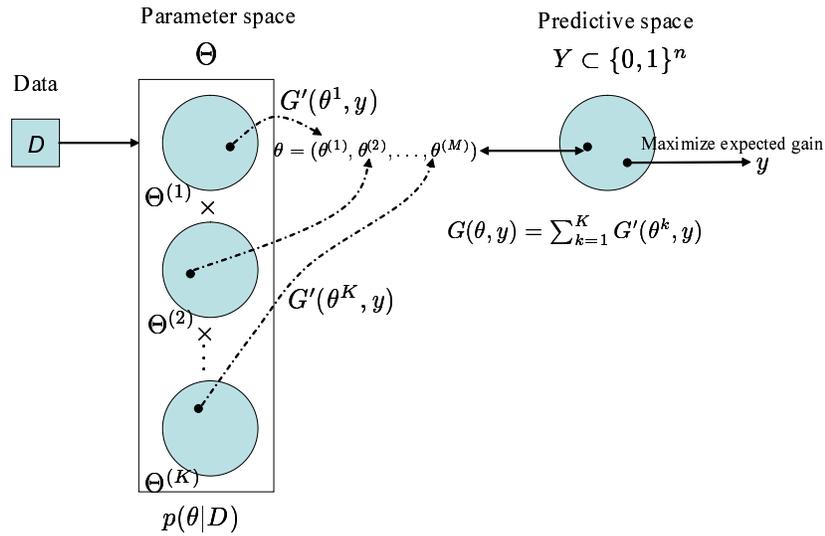} % Fig 2
}
\caption{\label{fig:rep_est}
{\bf Schematic diagram of the representative estimator (Definition~\ref{def:rep_est}).}
The parameter space $\Theta$ is a product space and is different from the predictive space $Y$.
}
\end{figure}
\begin{figure}[th]
\centerline{
\includegraphics[width=0.65\textwidth]{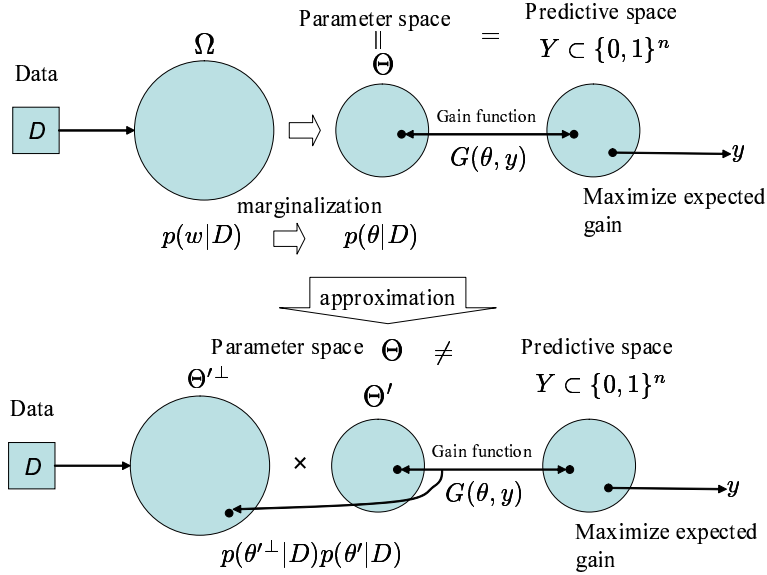} % Fig 3
}
\caption{\label{fig:approx_est}
{\bf Schematic diagram of the approximated $\gamma$-type estimator (Definition~\ref{est:approx}).}
The estimator in the top figure shows the $\gamma$-centroid estimator with the
marginalized probability distribution, and the one in the bottom figure shows
its approximation.
}
\end{figure}
%
%
%--------------------------------------------------------------------------------
\subsection{Applications in bioinformatics}\label{sec:app}
%--------------------------------------------------------------------------------
%
In this section we describe several applications to bioinformatics of the general theories.
%% The estimators used in those applications are categorized into three types:
%% (1) Section~\ref{sec:rna_sec_pred} and Section~\ref{sec:alignment} contain examples 
%% of the $\gamma$-centroid estimator (Definition~\ref{def:GCE}),
%% (2) Section~\ref{sec:rna_sec_pred_m} and Section~\ref{sec:alignment_align} 
%% contain examples of the representative estimator (Definition~\ref{def:rep_est}), and
%% (3) Section~\ref{sec:rna_sec_pred_hom}, Section~\ref{sec:alignment_hom} 
%% and Section~\ref{sec:rna_alignment}
%% contain examples of the approximated $\gamma$-type estimator (Definition~\ref{est:approx}).
Some of these applications have already been published.
In those cases, we briefly explain the applications and the readers should see the
original paper for further descriptions as well as the computational experiments.
All of the applications in this section are summarized 
in Table~\ref{tab:sumary}.
%

%--------------------------------------------------------------------------------
\subsubsection{Pairwise alignment of biological sequences (Problem~\ref{prob:align})}\label{sec:alignment}
%--------------------------------------------------------------------------------

The pairwise alignment of biological (DNA, RNA, protein) sequences 
(Problem~\ref{prob:align}) is another fundamental and important problem 
of sequence analysis in bioinformatics (cf.~\cite{Durbin:1998}).

%--------------------------------------------------------------------------------
%\subsubsubsection
%{\bf $\gamma$-centroid estimator for Problem~\ref{prob:align}}\\
%--------------------------------------------------------------------------------

The $\gamma$-centroid estimator for Problem~\ref{prob:align} can be introduced as follows:

\begin{estimator}[$\gamma$-centroid estimator for Problem~\ref{prob:align}]\label{est:align}
For Problem~\ref{prob:align}, we obtain the $\gamma$-centroid estimator where
the predictive space $Y$ is equal to $\mathcal{A}(x,x')$  and 
the probability distribution on $Y$ is taken by 
$p^{(a)}(\theta|x,x')$.
\end{estimator}
First, Theorem~\ref{thm:impl_g_est}
and the definition of $\mathcal{A}(x,x')$ lead to the following property.
\begin{property}[A relation of Estimator~\ref{est:align} with accuracy measures]
The $\gamma$-centroid estimator for Problem~\ref{prob:align}{} is suitable for the accuracy measures: 
SEN, PPV, MCC and F-score {\em with respect to the aligned-bases} in the predicted alignment.
\end{property}
Note that accurate prediction of aligned-bases is important for the analysis of 
alignments, for example, in phylogenetic analysis.
Therefore, the measures in above are often used in evaluations of alignments e.g. \cite{Frith_BMCB}.

The marginalized probability 
$p_{ik}=p^{(a)}(\theta_{ik}=1|x,x')=\sum_{\theta\in\mathcal{A}(x,x')}I(\theta_{ik}=1)p^{(a)}(\theta|x,x')$
is called the {\em aligned-base (matching) probability} in this paper.
The aligned-base probability matrix $\{p_{ik}\}_{i,k}$ can be computed 
by the forward-backward algorithm whose time complexity is equal to $O(|x||x'|)$~\cite{Durbin:1998}.
Now, Theorem~\ref{theorem:g_centroid_th} leads to the following property.
\begin{property}[Computation of Estimator~\ref{est:align}]
The pairwise alignment of Estimator~\ref{est:align} is found by maximizing
the sum of aligned-base probabilities $p_{ik}$ 
(of the aligned-bases in the predicted alignment) that are larger
than $1/(\gamma+1)$. 
Therefore, it can be 
computed by a Needleman-Wunsch-style dynamic programming (DP) algorithm~\cite{pmid5420325} after calculating
the aligned-base matrix $\{p_{ik}\}$:
\begin{align}
M_{i,k}&=\max\left\{
\begin{array}{l}
M_{i-1,k-1} + (\gamma+1)p_{ik}-1\\
M_{i-1,k}\\
M_{i, k-1}
\end{array}\right.\label{eq:DP_a_centroid_suppl}
\end{align}
where 
$M_{i,k}$ stores the optimal value of the alignment between two sub-sequences,
$x_{1}\cdots x_i$ and $x'_{1}\cdots x_k$.
\end{property}
%
%In the limit $\gamma\to\infty$, this DP algorithm is equivalent to the one proposed 
%by Holmes {\it et al.}~\cite{pmid9773345} that maximizes the sum of matching probabilities
%in the predicted alignment.
The time complexity of the recursion of the DP algorithm in Eq.~(\ref{eq:DP_a_centroid_suppl}) is 
equal to $O(|x||x'|)$, so the total computational
cost for predicting the secondary structure of the $\gamma$-centroid estimator remains
$O(|x||x'|)$.

By using Corollary~\ref{cor:WC_01}, we can predict the pairwise alignment 
of Estimator~\ref{est:align} with $\gamma\in[0,1]$ without using the DP algorithm in Eq.~(\ref{eq:DP_a_centroid_suppl}).
\begin{property}[Computation of Estimator~\ref{est:align} with $0\le\gamma\le 1$]
The pairwise alignment of the $\gamma$-centroid estimator can be predicted 
by collecting the aligned-bases whose probabilities are larger than $1/(\gamma+1)$.
\end{property}
The genome alignment software called \last~({\tt http://last.cbrc.jp/})~\cite{Frith_BMCB,pmid20110255} 
employs the $\gamma$-centroid estimator accelerated by an X-drop algorithm, and
the authors indicated that Estimator~\ref{est:align} reduced the false-positive
aligned-bases, compared to the conventional alignment (maximum score estimator).

%--------------------------------------------------------------------------------
%\subsubsubsection
%{\bf Relations of Estimator~\ref{est:align} with existing estimators}\\
%--------------------------------------------------------------------------------

Relations of Estimator~\ref{est:align} with existing estimators are summarized as follows:

\begin{enumerate}
\item A relation with the estimator by Miyazawa \cite{pmid8771180} (i.e. the centroid estimator):

Estimator~\ref{est:align} where $\gamma=1$ and the Miyazawa model 
is equivalent to the centroid estimator
proposed by Miyazawa \cite{pmid8771180}.
\item A relation with the estimator by Holmes {\it et al.}~\cite{pmid9773345}:

Estimator~\ref{est:align} with sufficiently large $\gamma$ is equivalent to the
estimator proposed by Holmes {\em et al.}, which 
maximizes the sum of matching probabilities in the predicted alignment.
\item A relation with the estimator in \probcons:
In the program, \probcons, Estimator~\ref{est:align} with pair HMM model and the sufficient large $\gamma$
was used. 
This means that \probcons{} only take care the sensitivity (or SPS) for the predicted alignment.
\item {A relation with the estimator by Schwartz {\it et al.}}:

For Problem~\ref{prob:align}, Schwartz {\it et al.}~\cite{schwartz-2005} proposed an 
Alignment Metric Accuracy (AMA) estimator, which is similar to the 
$\gamma$-centroid estimator (see also \cite{pmid18796475}).
The AMA estimator is a maximum gain estimator (Definition~\ref{def:MEG}) with the following gain function.
%
%% \begin{align*}
%% E[\mathrm{AMA}] = E_{\theta|x,x'} [G^{(\mathrm{AMA})}(\theta,y)]
%% = \sum_{\theta\in\mathcal{A}(x,x')} G^{(\mathrm{AMA})}(\theta,y)p(\theta|x,x')
%% \end{align*}
%% where $\mathcal{A}(x,x')$ is denoted by a space of pairwise alignments of $x$ and $x'$ 
%% and
\begin{align*}
&G^{(\mathrm{AMA})}(\theta,y) = \\
&2 \sum_{i,j} I(\theta_{ij}=1)I(y_{ij}=1)
+ G_f \biggl\{ 
\sum_{i} \prod_{j} I(\theta_{ij}=0) I(y_{ij}=0)
+
\sum_{j} \prod_{i} I(\theta_{ij}=0) I(y_{ij}=0)
\biggr\}%\label{eq:rel_ama}
\end{align*}
for $\theta, y\in\mathcal{A}(x,x')$.
In the above equation, $G_f\ge 0$ is a gap factor, which is a weight 
for the prediction of gaps.
We refer to the function $G^{(\mathrm{AMA})}(\theta,y)$ as the gain function of the AMA estimator.
In a similar way to that described in the previous section, we obtain a relation between 
$G^{(\mathrm{AMA})}(\theta,y)$ and $G^{(\mathrm{centroid})}(\theta,y)$ (the gain function of
the $\gamma$-centroid estimator). If we set $1/G_f=\gamma$,
then we obtain
\begin{align}
G^{(\mathrm{AMA})}(\theta,y) &= \frac{2}{\gamma}G^{(\mathrm{centroid})}(\theta,y) + \frac{1}{\gamma}
A(\theta,y)+ C_\theta\label{eq:rel_ama}
\end{align}
where 
\begin{align}
A(\theta,y)=
\sum_{i}\sum_{(j_1,j_2):j_1\ne j_2} I(\theta_{ij_1}=1)I(y_{ij_2}=1)
+\sum_{j}\sum_{(i_1,i_2):i_1\ne i_2} I(\theta_{i_1j}=1)I(y_{i_2j}=1)\nonumber
\end{align}
and $C_\theta$ is a value which does not depend on $y$.
If $I(\theta_{ij_1}=1)I(y_{ij_2}=1)=1$ for $j_1 \ne j_2$, then
we obtain $I(\theta_{ij_1}=1)I(y_{ij_1}=0)=1$ and
$I(\theta_{ij_2}=0)I(y_{ij_2}=1)=1$, and this means
that $(i,j_1)$ is an aligned pair that is a false negative and $(i,j_2)$ is an aligned pair that is a false positive
when $\theta$ is a reference alignment and $y$ is a predicted alignment.
Therefore, the terms $A(\theta,y)$ (in Eq.~(\ref{eq:rel_ama})) in the gain function of AMA are 
not appropriate for the evaluation measures SEN, PPV, MCC and F-score for aligned bases. 
In summary, the $\gamma$-centroid estimator is suitable for the evaluation measures:
SEN, PPV and F-score with respect to the aligned-bases while the AMA estimator is suitable for the AMA.  
\end{enumerate}

%--------------------------------------------------------------------------------
\subsubsection{Secondary structure prediction of an RNA sequence (Problem~\ref{prob:rnas})}\label{sec:rna_sec_pred}
%--------------------------------------------------------------------------------
%
Secondary structure prediction of an RNA sequence (Problem~\ref{prob:rnas}) 
is one of the most important problems of sequence analysis 
in bioinformatics.
Its importance has increased due to the recent discovery of 
functional non-coding RNAs (ncRNAs) because the
functions of ncRNAs are closely related to their secondary structures~\cite{pmid15608160}.
%
%--------------------------------------------------------------------------------
%\subsubsubsection
%{\bf $\gamma$-centroid estimator for Problem~\ref{prob:rnas}}\\
%--------------------------------------------------------------------------------
%

$\gamma$-centroid estimator for Problem~\ref{prob:rnas} can be introduced as follows:
\begin{estimator}[$\gamma$-centroid estimator for Problem~\ref{prob:rnas}]\label{est:rna_sec_pred}
For Problem~\ref{prob:rnas}, we obtain the $\gamma$-centroid estimator (Definition~\ref{def:GCE}) where
the predictive space $Y$ is equal to $\mathcal{S}(x)$  
and the probability distribution 
on $Y$ is taken by $p^{(s)}(\theta|x)$.
\end{estimator}
The general theory of the $\gamma$-centroid estimator 
leads to several properties.
First, the following property is derived from Theorem~\ref{thm:impl_g_est} 
and the definition of $\mathcal{S}(x)$.
\begin{property}[A relation of Estimator~\ref{est:rna_sec_pred} with accuracy measures]
The $\gamma$-centroid estimator for Problem~\ref{prob:rnas}{} is suitable for 
the widely-used accuracy measures of the RNA secondary structure prediction: 
SEN, PPV and MCC {\em with respect to base-pairs}
in the predicted secondary structure.
\end{property}
Because the base-pairs in a secondary structure are biologically important,
SEN, PPV and MCC with respect to base-pairs are widely used in evaluations 
of RNA secondary structure prediction,
for example, \cite{pmid16873527,centroidfold-submit,pmid17646296}.

The marginalized probability 
$p_{ij}=p^{(s)}(\theta_{ij}=1|x)=\sum_{\theta\in\mathcal{S}(x)}I(\theta_{ij}=1)p^{(s)}(\theta|x)$
is called a {\em base-pairing probability}.
The base-paring probability matrix $\{p_{ij}\}_{i<j}$ can be computed by the Inside-Outside
algorithm whose time complexity is equal to $O(|x|^3)$ where $|x|$ is the length of RNA sequence 
$x$~\cite{pmid1695107,Durbin:1998}.
Then, Theorem~\ref{theorem:g_centroid_th} leads to the following property.
\begin{property}[Computation of Estimator~\ref{est:rna_sec_pred}]
The secondary structure of Estimator~\ref{est:rna_sec_pred} is found by
maximizing the sum of the base-pairing probabilities $p_{ij}$ 
(of the base-pairs in the predicted structure)
that are larger than $1/(\gamma+1)$. 
Therefore, it can be 
computed by a Nussinov-style dynamic programming (DP) algorithm~\cite{Nussinov1978} after
calculating the base-pairing
probability matrix $\{p_{ij}\}$:
\begin{align}
M_{i,j} = \max \left \{
\begin{array}{ll}
M_{i+1,j} \\
M_{i,j-1} \\
M_{i+1,j-1} + (\gamma+1) p_{ij} - 1\\
\max_k \left[M_{i,k} + M_{k+1,j}\right] 
\end{array}
\right.\label{eq:dp_centroid_rna_sec_suppl}
\end{align}
where $M_{i,j}$ stores the best
score of the sub-sequence $x_i x_{i+1}\cdots x_j$.
\end{property}
If we replace ``$(\gamma+1) p_{ij} - 1$'' with ``$1$'' in Eq.~(\ref{eq:dp_centroid_rna_sec_suppl}), 
the DP algorithm is equivalent to the
Nussinov algorithm \cite{Nussinov1978} that maximizes the number of base-pairs in a predicted secondary
structure.
The time complexity of the recursion of the DP algorithm in Eq.~(\ref{eq:dp_centroid_rna_sec_suppl}) 
is equal to $O(|x|^3)$.
Hence, the total computational
cost for predicting the secondary structure of the $\gamma$-centroid estimator remains
$O(|x|^3)$, which is the same time complexity as for standard software: 
\mfold~\cite{pmid12824337}, \rnafold~\cite{hofacker1994} 
and \rnastructure~\cite{pmid15123812}.

By using Corollary~\ref{cor:WC_01}, we can predict the secondary structure of Estimator~\ref{est:rna_sec_pred}
with $\gamma\in[0,1]$ without using the DP algorithm in Eq.~(\ref{eq:dp_centroid_rna_sec_suppl}).
\begin{property}[Computation of Estimator~\ref{est:rna_sec_pred} with $0<\gamma\le 1$]%$\gamma\in[0,1]$]
The secondary structure of the $\gamma$-centroid  estimator with $\gamma\in[0,1]$ can be predicted 
by collecting the base-pairs whose probabilities are larger than $1/(\gamma+1)$.
\end{property}
The software \centroidfold~\cite{centroidfold-submit,pmid19435882} implements 
Estimator~\ref{est:rna_sec_pred} with various probability distributions for the
secondary structures, such as the \contrafold{} and \mccaskill{} models.

%--------------------------------------------------------------------------------
%\subsubsubsection
%{\bf Relations of Estimator~\ref{est:rna_sec_pred} with other estimators}\\
%--------------------------------------------------------------------------------
%

Relations of Estimator~\ref{est:rna_sec_pred} with other estimators are summarized as follows:

\begin{enumerate}
\item A relation with the estimator used in \sfold~\cite{pmid16109749,pmid15215366}:

Estimator~\ref{est:rna_sec_pred} with $\gamma=1$ and the McCaskill model 
(i.e. the centroid estimator with the McCaskill model) is 
equivalent to the estimator used in the \sfold{} program.
\item A relation with the estimator used in \contrafold: 

For Problem~\ref{prob:rnas}, Do {\it et al.}~\cite{pmid16873527} proposed an MEA-based estimator,
which is similar to the $\gamma$-centroid estimator.
(The MEA-based estimator was also used in a recent paper~\cite{pmid19703939}.)
The MEA-based estimator is defined by the maximum expected gain estimator (Definition~\ref{def:MEG})
with the following gain function for $\theta$ and $y\in \mathcal{S}(x)$.
\begin{align}
&G^{(\mathrm{contra})}(\theta, y)=
\sum_{i=1}^{|x|} 
\Bigl[ 
{\gamma}\sum_{j:j\ne i}  I(\theta_{ij}^*=1)I(y_{ij}^*=1)+
\prod_{j:j\ne i} I(\theta_{ij}^*=0)I(y_{ij}^*=0)
\Bigl]\label{eq:gf_mea}
\end{align}
where $\theta^*$ and $y^*$ are symmetric extensions of (upper triangular matrices) 
$\theta$ and $y$, respectively
(i.e. $\theta^*_{ij}=\theta_{ij}$ for $i<j$ and $\theta^*_{ij}=\theta_{ji}$ for $j<i$;
the definition of $y^*$ is similar.).
It should be noted that, under the general estimation problem of Problem~\ref{prob:1},
the gain function of Eq.~(\ref{eq:gf_mea}) cannot be introduced, and the gain function 
is specialized for the problem of RNA secondary structure prediction.

The relation between the gain function of the $\gamma$-centroid estimator 
(denoted by $G^{(\mathrm{centroid})}(\theta,y)$ and defined in Definition~\ref{def:GCE}) 
and the one of the MEA-based estimator is
\begin{align}
G^{(\mathrm{contra})}(\theta,y)=G^{(\mathrm{centroid})}(\theta,y)+A(\theta,y)+C(\theta)
\end{align}
where the additional term $A(\theta,y)$ is positive for {\em false} predictions
of base-pairs (i.e., $\mbox{FP}$ and $\mbox{FN}$) and $C(\theta)$ does not depend on 
the prediction $y$ (see~\cite{centroidfold-submit} for the proof).
This means the MEA-based estimator by Do et al. possess a bias against the widely-used accuracy measures for Problem~\ref{prob:rnas}
(SEN, PPV and MCC of base-pairs) 
compared with the $\gamma$-centroid estimator.
Thus, the $\gamma$-centroid estimator is theoretically 
superior to the MEA-based estimator by Do et al. with respect to those accuracy measures.
In computational experiments, the authors confirmed that the $\gamma$-centroid
estimator is always better than the MEA-based estimator when we used the same probability
distribution of secondary structures.
See~\cite{centroidfold-submit} for details of the computational experiments.
\end{enumerate}

%--------------------------------------------------------------------------------
\subsubsection{Estimation of phylogenetic trees (Problem~\ref{prob:pt})}\label{sec:pt}
%--------------------------------------------------------------------------------

%% \begin{problem}[Estimation of phylogenetic trees]\label{prob:pt}
%% In Problem~\ref{prob:pt}, the data set $D$ is represented as $D=\{S\}$, where
%% $S$ is a set of operational taxonomic units and the predictive space $Y$ 
%% is identical to $\mathcal{T}(S)$, which is a space of possible phylogenetic trees 
%% (unrooted and multi-branched trees; defined in Section~A.3). 
%% \end{problem}

%% In this problem, the representation of a phylogenetic tree 
%% is based on the partitioning of $S$, which is performed by cutting every edge in the tree.
%% As discussed in the general theory, 
%% the $\gamma$-centroid estimator directly yields an estimator for Problem~\ref{prob:pt}.

%--------------------------------------------------------------------------------
%\subsubsubsection
%{\bf $\gamma$-centroid estimator for Problem~\ref{prob:pt}}\\
%--------------------------------------------------------------------------------

The $\gamma$-centroid estimator for Problem~\ref{prob:pt} can be introduced as follows:

\begin{estimator}[$\gamma$-centroid estimator for Problem~\ref{prob:pt}]\label{est:pt}
For Problem~\ref{prob:pt}, we obtain the $\gamma$-centroid estimator (Definition~\ref{def:GCE})
where the predictive space $Y$ is equal to $\mathcal{T}(S)$
and the probability distribution on $Y$ is taken by $p^{(t)}(\theta|S)$.
\end{estimator}

The following property is easily obtained by Theorem~\ref{thm:impl_g_est} and \cite{citeulike:515472}.
\begin{property}[Relation of 1-centroid estimator and topological distance]
The $\gamma$-centroid estimator with $\gamma=1$ (i.e. centroid estimator) 
for Problem~\ref{prob:pt}{} minimizes expected topological distances.
\end{property}

For $X\in I^{(0)}$ ($I^{(0)}$ is a set of partitions of $S$ and is formally defined in the previous section), 
we call the marginalized probability $p_X = \sum_{\theta\in\mathcal{T}(S)} I(\theta_X=1)p^{(t)}(\theta|S)$
{\em partitioning probability}.
However, it is difficult to compute $\{p_X\}_{X\in I^{(0)}}$ as efficiently as 
in the prediction of secondary structures of RNA sequences, 
where it seems possible to compute the base-pairing probability matrix in polynomial time 
by using dynamic programming).
Instead, a sampling algorithm can be used for estimating $\{p_X\}_{X\in I^{(0)}}$ 
approximately \cite{MetropolisEtAl:53} for this problem.
Once $\{p_X\}_{X\in I^{(0)}}$ is estimated, Theorem~\ref{theorem:g_centroid_th} leads to the following:
\begin{property}[Computaion of Estimator~\ref{est:pt}]
The phylogenetic tree of Estimator~\ref{est:pt} is found by maximizing the 
sum of the partitioning probabilities $p_X$ (of the partitions given by the predicted
tree) that are larger than $1/(\gamma+1)$.
\end{property}

In contrast to Estimator~\ref{est:align} 
(the $\gamma$-centroid estimator for secondary structure prediction of RNA sequence) 
and Estimator~\ref{est:rna_sec_pred} (the $\gamma$-centroid estimator for pairwise alignment),
it appears that there is no efficient method (such as dynamic programming algorithms) to computed 
Estimator~\ref{est:pt} with $\gamma>1$. 
Estimator~\ref{est:align} with $\gamma\in[0,1]$, however,
can be computed by using the following property, which
is directly proven by Corollary~\ref{cor:WC_01} and the definition of the space $\mathcal{T}(S)$.

\begin{property}[Estimator~\ref{est:pt} with $0<\gamma\le 1$]
The $\gamma$-centroid estimator with $\gamma\in[0,1]$ for Problem~\ref{prob:pt}{}
contains its consensus estimator.
\end{property}

%--------------------------------------------------------------------------------
\subsubsection{Alignment between two {\em alignments} 
of biological sequences}\label{sec:alignment_align}
%--------------------------------------------------------------------------------

In this section we consider the problem of the 
alignment between two multiple alignments of biological sequences 
(Figure~\ref{fig:align_align}), 
which is often important in the multiple alignment of RNA sequences~\cite{pmid15687296}.
This problem is formulated as follows.
\begin{figure}[t]
\centerline{
\includegraphics[width=0.7\textwidth]{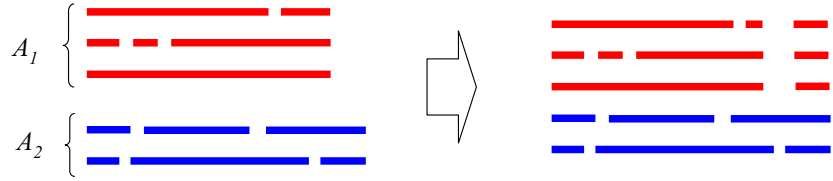} % Fig 4
}
\caption{\label{fig:align_align}
{\bf Alignment between two multiple alignments $A_1$ and $A_2$ (Problem~\ref{prob:amulti})}
}
\end{figure}
\begin{problem}[Alignment between two alignments of biological sequences]\label{prob:amulti}
The data is represented as $D=\{A,A'\}$ where
${A}$ and $A'$ are alignments of biological sequences
and
the predictive space $Y$ is equal to $\mathcal{A}(A,A')$, that 
is, the space of the alignments of ${A}$ and ${A}'$.
\end{problem}
In the following, $l(A)$ and $n(A)$ denote the length of the alignment 
and the number of sequences in the alignment $A$, respectively.
If both ${A}$ and ${A}'$ contain a single biological sequence (with no gap),
Problem~\ref{prob:amulti} is equivalent to conventional pairwise alignment 
of biological sequences (Problem~\ref{prob:align}).
As in common secondary structure prediction, 
the representative estimator plays an important role in this application.
%

%--------------------------------------------------------------------------------
\begin{estimator}[Representative estimator for Problem~\ref{prob:amulti}]\label{est:amulti}
%--------------------------------------------------------------------------------
For Problem~\ref{prob:amulti}, 
we obtain the representative estimator (Definition~\ref{def:rep_est}).
The gain function $G'(\theta^k,y)$ is the gain function of the $\gamma$-centroid estimator.
The parameter space $\Theta$ is represented as a product space
$\Theta=\prod_{x\in A, x'\in A'} \mathcal{A}(x,x')$ where $\mathcal{A}(x,x')$
is defined in the previous section.
The probability distribution on the parameter space $\Theta$ is given by 
$p(\theta|D)=\prod_{x\in A, x'\in A'}p^{(a)}(\theta^{xx'}|x,x')$ for
$\theta=(\theta^{xx'})_{x\in A,x'\in A'}\in\Theta$
where $p^{(a)}(\theta|x,x')$ is given in the previous section
(when $x$ or $x'$ contains some gaps, $p^{(a)}(\theta|x,x')$ is defined by
the sequences with the gaps removed).
\end{estimator}
%
%The equivalence between the representative estimator 
%and the estimator of Eq.~[{\bf 7}] in Proposition~5 (in the main paper) shows 
%the following property related to the MEA.
Corollary~\ref{cor:gamma_est_for_representative} proves the following properties of Estimator~\ref{est:rep_pred}.
\begin{property}[A Relation of Estimator~\ref{est:amulti} with accuracy measures]
Estimator~\ref{est:amulti} is consistent with the accuracy process for 
Problem~\ref{prob:amulti}
that is shown in Figure~\ref{fig:eval_proc_al_al}. 
We compare every pairwise alignment of $x\in A$ and $x'\in A'$ 
with the reference alignment.
These comparisons are made using TP, TN, FP and FN with respect
to the aligned-bases (e.g., using SEN, PPV and F-score).
\end{property}
\begin{figure}[t]
\centerline{
\includegraphics[width=0.7\textwidth]{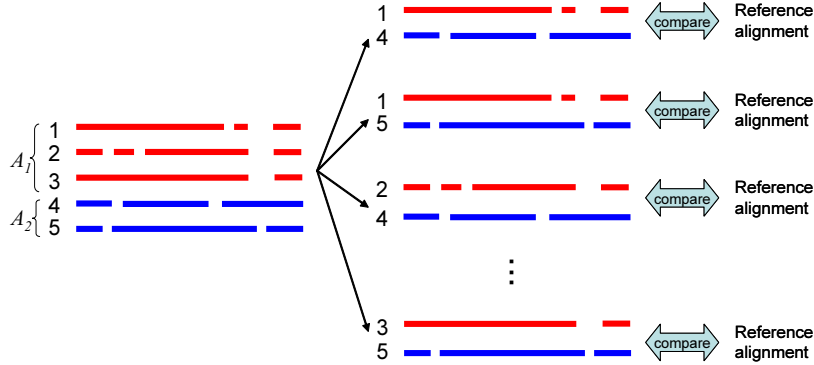} %Fig 5
}
\caption{\label{fig:eval_proc_al_al}
{\bf An evaluation process for Problem~\ref{prob:amulti}.} 
The comparison between every pairwise
alignment and the reference alignment is conducted using TP, TN, FP and FN with respect
to the aligned-bases.
}
\end{figure}
\begin{property}[Computation of Estimator~\ref{est:amulti}]\label{prop:al_al_th}
Estimator~\ref{est:amulti} can be given by maximizing 
the sum of probabilities $\overline{p_{ik}}$ that are larger than $1/(\gamma+1)$
where
\begin{align}
\overline{p_{ik}} 
=\frac{1}{n(A)n(A')} \sum_{x\in A}\sum_{x'\in A'} \sum_{\theta\in\Theta} 
I(\theta_{ik}=1) p^{(a)}(\theta|x,x'). \label{eq:ap}
\end{align}
Therefore, the pairwise alignment of Estimator~\ref{est:amulti} 
can be computed by the Needleman-Wunsch-type DP algorithm of 
Eq.~(\ref{eq:DP_a_centroid_suppl}) in which we replace $p_{ij}$ with Eq.~(\ref{eq:ap}).
\end{property}
%
%
%Moreover, Proposition~7 in the main paper directly leads to the following result, 
%which enables us to compute the proposed estimator for $\gamma\in[0,1]$ without 
%conducting the Needleman-Wunsch-type DP in Proposition~\ref{prop:al_al_th}.
%
%--------------------------------------------------------------------------------
\begin{property}[Computation of Estimator~\ref{est:amulti} with $0\le\gamma\le 1$]
The Estimator~\ref{est:amulti} with $\gamma\in[0,1]$ contains 
the consensus estimator.
Moreover, the consensus estimator is identical to 
the estimator $y=\{y_{ik}^*\}_{1\le i\le l(A),1\le k\le l(A')}$:
\begin{equation*}
y_{ik}^* =\left\{
\begin{array}{ll}
1 & \mbox{if } \overline{p_{ik}}>\frac{1}{\gamma+1}\\
0 & \mbox{if } \overline{p_{ik}}\le \frac{1}{\gamma+1}
\end{array}\right.
\mbox{ for } i=1,2,\ldots,l(A), k=1,2,\ldots,l(A')
\end{equation*}
where $\overline{p_{ik}}$ is defined in Eq.~(\ref{eq:ap}).
\end{property}

The probability matrix $\{\overline{p_{ik}}\}_{1\le i\le l(A),1\le k\le l(A)'}$ is often called
an {\em averaged} aligned-base (matching) probability 
matrix of ${A}$ and ${A}'$.
In the iterative refinement of the \probcons~\cite{pmid15687296} algorithm, 
the existing multiple alignments are randomly partitioned into two groups 
and those two multiple alignments are re-aligned.
This procedure is equivalent to Problem~\ref{prob:amulti}.

%--------------------------------------------------------------------------------
%\subsubsubsection
%{\bf Relations with Estimator~\ref{est:amulti} with existing estimators}\\
%--------------------------------------------------------------------------------

The estimator used in \probcons{} 
is identical to Estimator~\ref{est:amulti} in the limit $\gamma\to\infty$.
Therefore, the estimator used in \probcons{} is a special case of Estimator~\ref{est:amulti}
and it only takes into account the SEN or SPS (sum-of-pairs score) 
of a predicted alignment.

%--------------------------------------------------------------------------------
\subsubsection{Common secondary structure prediction from a multiple 
alignment of RNA sequences}\label{sec:rna_sec_pred_m}
%--------------------------------------------------------------------------------

Common secondary structure prediction from 
a given multiple alignment of RNA sequences plays important role 
in RNA research including
non-coding RNA (ncRNA)~\cite{pmid19833701} and viral RNAs~\cite{pmid19369331},
because it is useful for 
phylogenetic analysis of RNAs~\cite{pmid19723687} and 
gene finding~\cite{pmid19833701, Washietl-2005,pmid16273071,pmid19908361}. 
In contrast to conventional secondary structure prediction of RNA sequences (Problem~\ref{prob:rnas}), 
the input of common
secondary structure prediction is a multiple alignment of RNA sequences and
the output is a secondary
structure whose length is equal to the length of the input alignment (see Figure~\ref{fig:com_sec}).
\begin{figure}[t]
\centerline{
\includegraphics[width=0.7\textwidth]{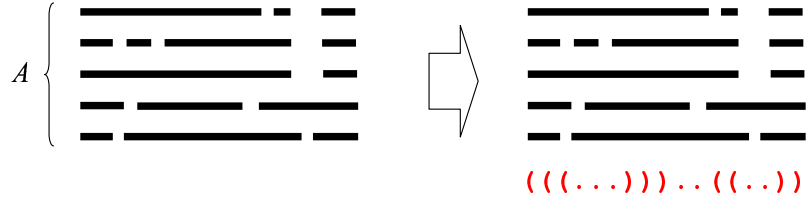} % Fig 6
}
\caption{\label{fig:com_sec}
{\bf Common secondary structure prediction (Problem~\ref{prob:comm_sec_str_pred})}
}
\end{figure}
\begin{problem}[Common secondary structure prediction]\label{prob:comm_sec_str_pred}
The data is represented as $D=\{A\}$ where $A$ is a multiple alignment of 
RNA sequences and the predictive space $Y$ is identical to $\mathcal{S}(A)$
(the space of secondary structures whose length is equal to the alignment).
\end{problem}

The representative estimator (Definition~\ref{def:rep_est}) directly gives an estimator
for Problem~\ref{prob:comm_sec_str_pred}.
\begin{estimator}[The representative estimator for Problem~\ref{prob:comm_sec_str_pred}]
\label{est:rep_pred}
For Problem~\ref{prob:comm_sec_str_pred}, we obtain the representative estimator
(Definition~\ref{def:rep_est}) as follows.
The gain function $G'(\theta^k,y)$ is the gain function of the $\gamma$-centroid
estimator.  
The parameter space is equal to $\Theta=\prod_{x\in A}\mathcal{S}(x)$ where
$\mathcal{S}(x)$ is the space of secondary structures.
The probability distribution on $\Theta$ is given by
$p(\theta|D)=\prod_{x\in A}p_x(\theta^x|A)$ where $p_x(\theta^x|A)$ is the 
probability distribution of the secondary structures of $x\in A$ after observing
the alignment $A$.
\end{estimator}
For example, $p_x(\theta^x|A)$ can be given by extending the $p^{(s)}(\theta|x)$, although
we have also proposed more appropriate probability distribution
(see~\cite{CentroidAlifold} for the details).

%The equivalence between the representative estimator and the estimator in Eq.~{\bf [5]}
%of Proposition~5 proves the following property.
Corollary~\ref{cor:gamma_est_for_representative} proves the following properties of Estimator~\ref{est:rep_pred}.
\begin{property}[A relation of Estimator~\ref{est:rep_pred} with accuracy measures]
Estimator~\ref{est:rep_pred} is consistent with an evaluation process for 
common secondary structure prediction:
First, we map
the predicted common secondary structure into secondary structures in the
multiple alignment, and then 
the mapped structures are compared with the reference secondary structures
based on TP, TN, FP and FN of the base-pairs using, for example, SEN, PPV and MCC
(Figure~\ref{fig:eval_com_sec}).
\end{property}
\begin{figure}[t]
\centerline{
\includegraphics[width=0.7\textwidth]{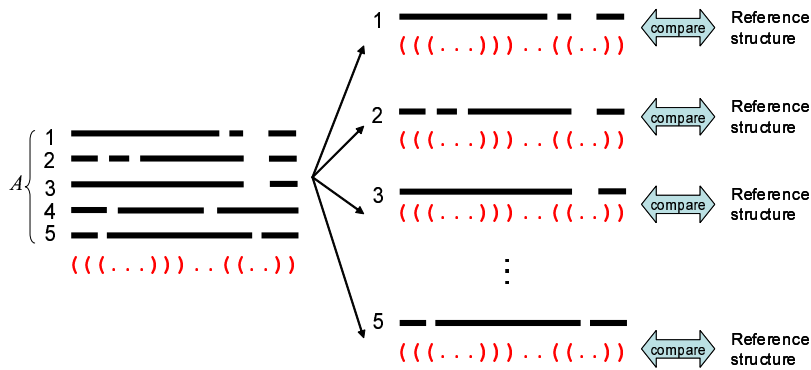} % Fig 7
}
\caption{\label{fig:eval_com_sec}
{\bf An evaluation process for common secondary structure prediction (Problem~\ref{prob:comm_sec_str_pred}).}
The comparison between each secondary structure
and the reference secondary structure is done using TP, TN, FP and FN with respect
to the base-pairs.
}
\end{figure}
Much research into common secondary structure prediction employs the
evaluation process in Figure~\ref{fig:eval_com_sec} (e.g.,~\cite{pmid19014431}).
%
%By Proposition~6, we have the following property.
%
\begin{property}[Computation of Estimator~\ref{est:rep_pred}]
The common secondary structure of Estimator~\ref{est:rep_pred}
is given by maximizing the sum of the averaged 
base-pairing probabilities $\overline{p_{ij}}$ where
\begin{align}
\overline{p_{ij}}=\frac{1}{|A|}\sum_{x\in A} p_x(\theta_{ij}^x=1|A).
\end{align}
Therefore, 
the common secondary structure of the estimator can be computed 
using the dynamic programming algorithm in Eq.~(\ref{eq:dp_centroid_rna_sec}) if we 
replace $p_{ij}$ 
with $\overline{p_{ij}}$.
\end{property}
%
%
%By using Proposition~7 in the main paper, 
Also, we can predict the secondary structure of 
Estimator~\ref{est:rep_pred} without conducting Nussinov-style DP:
\begin{property}[Computation of Estimator~\ref{est:rep_pred} with $0\le\gamma\le 1$]
The secondary structure of Estimator~\ref{est:rep_pred} 
with $\gamma\in[0,1]$ can be predicted by collecting the base-pairs whose averaged base-paring
probabilities are larger than $1/(\gamma+1)$.
\end{property}

%--------------------------------------------------------------------------------
%\subsubsubsection
%{\bf Relations of Estimator~\ref{est:rep_pred} with existing estimators}\\
%--------------------------------------------------------------------------------

It should be noted that the tools of common secondary structure prediction, 
\rnaalifold~\cite{pmid19014431}, \petfold~\cite{pmid18836192} and 
\mccaskillmea~\cite{pmid17182698}
are also considered as a representative estimators 
(Definition~\ref{def:rep_est}). 
In \cite{CentroidAlifold}, the authors systematically discuss those points.
See \cite{CentroidAlifold} for details.

%--------------------------------------------------------------------------------
\subsubsection{Pairwise alignment using homologous sequences}\label{sec:alignment_hom}
%--------------------------------------------------------------------------------
%
As in the previous application to RNA secondary structure prediction using
homologous sequences, 
if we obtain a set of homologous sequences $H$ for the target sequences 
$x$ and $x'$ 
(see Figure~\ref{fig:align_hom}), 
we would have more accurate estimator for the pairwise alignment of $x$ and $x'$
than Estimator~\ref{est:align}. The problem is formulated
as follows.
\begin{figure}[t]
\centerline{
\includegraphics[width=0.7\textwidth]{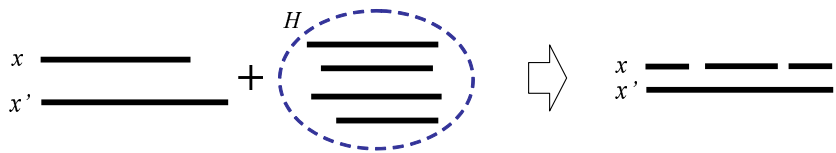} % Fig 8
}
\caption{\label{fig:align_hom}
{\bf Pairwise alignment using homologous sequences (Problem~\ref{prob:pct})}
}
\end{figure}
%
%--------------------------------------------------------------------------------
\begin{problem}[Pairwise alignment using homologous sequences]\label{prob:pct}
%--------------------------------------------------------------------------------
%
The data is represented as $D=\{x,x',H\}$ where
$x$ and $x'$ are two biological sequences that we would like to align, 
and $H$ is a set of homologous sequences for $x$ and $x'$.
The predictive space $Y$ is given by $Y = \mathcal{A}(x,x')$ which is 
the space of the pairwise alignments of two sequences $x$ and $x'$.
\end{problem}
The difference between Problem~\ref{prob:align}{} and this problem is that
we can use other biological sequences (that seem to be homologous to $x$ and $x'$) 
besides the two sequences $x$ and $x'$ which are being aligned. 

We can introduce the probability distribution (denoted by $p^{(a)}(\theta|x,x',h)$) on the space of
multiple alignments of three sequences $x$, $x'$ and $h$ (denoted by $\mathcal{A}(x,x',h)$ and
whose definition is similar to that of $\mathcal{A}(x,x')$)
by a model such as the triplet HMM (which is similar to the pair HMM).
Then, we obtain a probability distribution on the space of pairwise alignments of $x$ and $x'$ 
(i.e., $\mathcal{A}(x,x')$) by marginalizing $p^{(a)}(\theta|x,x',h)$ into the space $\mathcal{A}(x,x')$:
\begin{align}
p(\theta|x,x')=\sum_{\theta'\in\Phi^{-1}(\theta)}p^{(a)}(\theta'|x,x',h)\label{eq:marg_ali3}
\end{align}
where $\Phi$ is the projection from $\mathcal{A}(x,x',h)$ into $\mathcal{A}(x,x')$.
Moreover, by averaging these probability distributions over $h\in H$, 
we obtain the following probability distribution on $\mathcal{A}(x,x')$:
\begin{align}
p(\theta|x,x')=\frac{1}{|H|}\sum_{h\in H}\sum_{\theta'\in\Phi^{-1}(\theta)}p^{(a)}(\theta'|x,x',h)\label{eq:p_align_h_ave}
\end{align}
where $|H|$ is the number of sequences in $H$.

The $\gamma$-centroid estimator with the distribution in Eq.~(\ref{eq:p_align_h_ave}) directly 
gives an estimator for Problem~\ref{prob:pct}.
However, to compute the aligned-base-pairs (matching) probabilities $p_{ik}$ 
with respect to this distribution
demands a lot of computational time, so we employ the approximated $\gamma$-type estimator (Definition~\ref{est:approx}) of
this $\gamma$-centroid estimator as follows.

%--------------------------------------------------------------------------------
\begin{estimator}[Approximated $\gamma$-type estimator for Problem~\ref{prob:pct}]\label{est:PE_pct}
%--------------------------------------------------------------------------------
We obtain the approximated $\gamma$-type estimator 
(Definition~\ref{est:approx}) for Problem~\ref{prob:pct} with the following settings.
The parameter space is given by $\Theta=\Theta'\times\Theta'^\perp$ where
\begin{align*}
\Theta'=\mathcal{A}(x,x') (=Y) \mbox{ and } \Theta'^\perp=\prod_{h\in H}[\mathcal{A}(x,h)\times\mathcal{A}(x',h)]
\end{align*}
and the probability distribution on the parameter space $\Theta'$ is defined by
\begin{align}
p(\theta\left|D\right.)
= p^{(a)}(\theta^{xx'}|x,x')\prod_{h\in H} \left[p^{(a)}(\theta^{xh}|x,h)
p^{(a)}(\theta^{x'h}|x',h)\right]
\end{align}
for 
$\theta=(\theta^{xx'}, \{\theta^{xh}, \theta^{x'h}\}_{h\in H})\in\Theta=\Theta'\times\Theta'^\perp$.
The pointwise gain function (see Definition~\ref{def:pgain_function}) in Eq.~(\ref{def:app_gamma_gain}) is defined by 
\begin{align}
\delta_{ik}(\theta)= \frac{1}{1+|H|}\biggl\{I(\theta^{xx'}_{ik}=1) + \sum_{h\in H}\sum_{v=1}^{|h|} I(\theta_{iv}^{xh}=1)I(\theta_{kv}^{x'h}=1)\biggr\}
\end{align}
where $|h|$ is the length of the sequence $h$.
\end{estimator}
\begin{property}[Computation of Estimator~\ref{est:PE_pct}]
The alignment of Estimator~\ref{est:PE_pct} is equal to the alignment that maximizes
the sum of $p_{ik}$ larger than $1/(\gamma+1)$ where
\begin{align} 
p_{ik}= \frac{1}{|H|+1}\biggl\{p(\theta^{xx'}_{ik}=1|x,x')+\sum_{h\in H}\sum_{v=1}^{|h|} 
p^{(a)}\left(\theta^{xh}_{iv}=1|x,h\right)p^{(a)}\left(\theta^{x'h}_{kv}=1\Big|x',h\right)\biggr\}.\label{eq:PCT_prob}
\end{align}
Therefore, the recursive equation of the dynamic program to calculate
the alignment of Estimator~\ref{est:PE_pct} is given by replacing 
$p_{ik}$  in Eq.~(\ref{eq:DP_a_centroid_suppl}) with 
Eq.~(\ref{eq:PCT_prob}).
\end{property}

Moreover, by using Theorem~\ref{theorem:general}, we have the following proposition,
which enables us to compute the proposed estimator for $\gamma\in[0,1]$ without
using (Needleman-Wunsch-type) dynamic programming.
%
%--------------------------------------------------------------------------------
\begin{property}[Computation of Estimator~\ref{est:PE_pct} for $0\le\gamma\le 1$]\label{prop:PCT}
%--------------------------------------------------------------------------------
The pairwise alignment of Estimator~\ref{est:PE_pct} with $\gamma \in [0,1]$ can be
predicted by collecting the aligned-bases
whose probability $p_{ik}$ in (\ref{eq:PCT_prob}) is larger than $1/(\gamma+1)$.
\end{property}

It should be noted that 
$\{p_{ik}\}_{1\le i\le|x|,1\le k\le|x'|}$ is identical to 
the {\em probability consistency transformation} (PCT) 
of $x$ and $x'$~\cite{pmid15687296}.
In \probcons~\cite{pmid15687296}, 
the pairwise alignment is predicted by the Estimator~\ref{est:PE_pct} with sufficiently 
large $\gamma$.
Therefore, the estimator for Problem~\ref{prob:pct} 
used in the \probcons{} algorithm is a special case of Estimator~\ref{est:PE_pct}.

%--------------------------------------------------------------------------------
\subsubsection{RNA secondary structure prediction using homologous sequences}\label{sec:rna_sec_pred_hom}
%--------------------------------------------------------------------------------
%
If we obtain a set of homologous RNA sequences for the target RNA sequence, we might have a more 
accurate estimator~\cite{pmid19478007} for secondary structure prediction 
than the $\gamma$-centroid estimator (Estimator~\ref{est:rna_sec_pred}).
This problem is formulated as follows and was considered in~\cite{pmid19478007} for the first time
{(See Figure~\ref{fig:sec_hom})}.
\begin{figure}[t]
\centerline{
\includegraphics[width=0.7\textwidth]{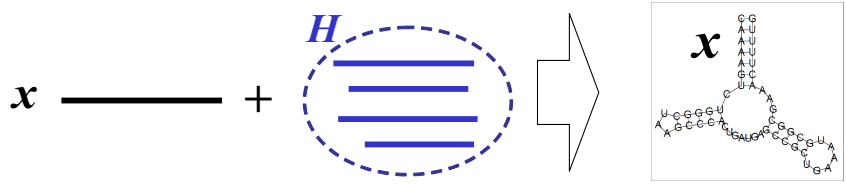} % Fig 9
}
\caption{\label{fig:sec_hom}
{\bf RNA secondary structure prediction using homologous sequences (Problem~\ref{prob:rna_sec_pred_hom})}
}
\end{figure}
\begin{problem}[RNA secondary structure prediction using homologous sequences]\label{prob:rna_sec_pred_hom}
The data $D$ is represented as $D=\{x,H\}$ where
$x$ is the target RNA sequence for which we would like to make secondary structure predictions and 
$H$ is the set of its homologous sequences.
The predictive space $Y$ is identical to $\mathcal{S}(x)$, the space of 
the secondary structures of an RNA sequence $x$.
\end{problem}
The difference between this problem and Problem~\ref{prob:rnas}{} is that we are able to
employ homologous sequence information 
for predicting the secondary structure of the target RNA sequence.
In this problem, it is natural that we assume 
the target sequence $x$ and each homologous sequence $h\in H$
share {\em common} secondary structures.
The common secondary structure is naturally modeled by a {\em structural} 
alignment (that considers not only the alignment between bases but also the alignment between
base-pairs), 
and the probability distribution (denoted by $p^{(sa)}(\theta|x,x')$)
on the space of the structural alignments of two RNA sequences $x$ and $x'$ (denoted by $\mathcal{SA}(x,x')$)
is given by the Sankoff model~\cite{sankoff1985}.
By marginalizing the distribution $p^{(sa)}$ into the space of secondary structures $\mathcal{S}(x)$ 
of the target sequence $x$, we obtain more reliable distribution $p(\theta|x)$ on $\mathcal{S}(x)$:
\begin{align}
p(\theta|x)=\sum_{\theta'\in\Phi^{-1}(\theta)}p^{(sa)}(\theta'|x,h)\label{eq:marg_p}
\end{align}
where $\Phi$ is the projection from $\mathcal{SA}(x,h)$ into $\mathcal{S}(x)$.
Moreover, by averaging these probability distributions on $\mathcal{S}(x)$, 
we obtain the following probability 
distribution of secondary structures of the target sequence.
\begin{align}
p(\theta|x)=\frac{1}{|H|}\sum_{h\in H}\sum_{\theta'\in\Phi^{-1}(\theta)}p^{(sa)}(\theta'|x,h)\label{eq:ave_marg_p}
\end{align}
where $|H|$ is the number of sequences in $H$.
The $\gamma$-centroid estimator with the probability distribution in Eq.~(\ref{eq:ave_marg_p}) gives a
reasonable estimator for Problem~\ref{prob:rna_sec_pred_hom}, because Eq.~(\ref{eq:ave_marg_p}) considers
consensus secondary structures between $x$ and $h\in H$.
However, the calculation of the $\gamma$-estimator requires huge computational cost
because it requires $O(nL^6)$ for computing the base-paring probability matrix $\{p_{ik}\}$
where $p_{ik}=\sum_{\theta\in\mathcal{S}(x)} I(\theta_{ij}=1)p(\theta|x)$ with 
the distribution of Eq.~(\ref{eq:ave_marg_p}). 
Therefore, we employ
the approximated $\gamma$-type estimator (Definition~\ref{est:approx}) of the $\gamma$-centroid estimator,
which is equivalent to the estimator proposed in~\cite{pmid19478007}.
\begin{estimator}[Approximated $\gamma$-type estimator for Problem~\ref{prob:rna_sec_pred_hom}]
\label{est:approx_est_rnas_hom}
We obtain the approximated $\gamma$-type estimator (Definition~\ref{est:approx}) 
for Problem~\ref{prob:rna_sec_pred_hom} with
the following settings. 
The parameter space is given by $\Theta=\Theta'\times\Theta'^\perp$ where
\begin{align*}
&\Theta'=\mathcal{S}(x)(=Y) \mbox{ and }
\Theta'^\perp=\prod_{h\in H}[\mathcal{A}(x,h)\times\mathcal{S}(h)],
\end{align*}
and the probability distribution on $\Theta$ is defined by
\begin{align*}
&p(\theta|D)=p^{(s)}(\theta^x|x)\prod_{h\in H}\left[p^{(a)}(\theta^{xh}|x,h)p^{(s)}(\theta^h|h)\right]
\end{align*}
for
$\theta=(\theta^x,\{\theta^{xh},\theta^h\}_{h\in H})\in\Theta=\Theta'\times\Theta'^\perp$.
Moreover, Eq.~(\ref{def:app_gamma_gain}) in the pointwise gain function is defined by 
\begin{align*}
&\delta_{ij}(\theta)=
\alpha I(\theta^{x}_{ij}=1) - \frac{1-\alpha}{|H|}\sum_{h\in H}
\sum_{k<l} I(\theta^{xh}_{ik}=1) I(\theta^{xh}_{jl}=1) I(\theta^{h}_{kl}=1)
\end{align*}
for $\alpha\in[0,1]$.
\end{estimator}
It should be noted that Estimator~\ref{prob:rna_sec_pred_hom} 
is equivalent to the estimator proposed in~\cite{pmid19478007}.
The secondary structure of the estimator can be computed by the following method.
\begin{property}[Computation of Estimator~\ref{est:approx_est_rnas_hom}]
The secondary structure of Estimator~\ref{est:approx_est_rnas_hom} is computed by
maximizing the sum of $p_{ij}$ larger than $1/(\gamma+1)$ where
\begin{align}
&p_{ij}= \alpha p^{(s,x)}_{ij} + \frac{1-\alpha}{|H|}\sum_{h\in H}\sum_{k<l} 
p^{(a,x,h)}_{ik,jl}
p^{(s,h)}_{kl}\label{eq:marg_prob_rnas_hom}.
\end{align}
Here, $p^{(s,x)}_{ij}=p^{(s)}(\theta^x_{ij}=1|x)$ and
$p^{(a,x,h)}_{ik,jl}=p^{(a)}(\theta^{xh}_{ik}=1, \theta^{xh}_{jl}=1|x,h)$.
Therefore, the secondary structure of Estimator~\ref{est:approx_est_rnas_hom}
can be computed by the Nussinov-type DP of 
Eq.~(\ref{eq:dp_centroid_rna_sec}) in which we replace $p_{ij}$ by Eq.~(\ref{eq:marg_prob_rnas_hom}).
\end{property}

The computational cost with respect to time for computing the secondary structure
of Estimator~\ref{est:approx_est_rnas_hom} 
is $O(n L^4)$ where $n$ is the number of RNA sequences and $L$ is the length 
of RNA sequences.
In~\cite{pmid19478007}, we employed a further approximation of the estimator, 
and reduced the computational cost to $O(n L^3)$.
We implemented this estimator in software called \centroidhomfold.
See~\cite{pmid19478007} for details of the theory and results of computational 
experiments.
Although the authors did not mention it in their paper~\cite{pmid19478007}, 
the following property holds.
\begin{property}[Computation of Estimator~\ref{est:approx_est_rnas_hom} with $0\le\gamma\le 1$]
Estimator~\ref{est:approx_est_rnas_hom} with $\gamma\in[0,1]$ 
can be predicted by collecting the aligned-bases where the 
(pseudo-)base-paring probability of Eq.~(\ref{eq:marg_prob_rnas_hom}) 
is larger than $1/(\gamma+1)$.
\end{property}

%
%--------------------------------------------------------------------------------
\subsubsection{Pairwise alignment of {\em structured} RNAs}\label{sec:rna_alignment}
%--------------------------------------------------------------------------------
%
In this section, we focus on the pairwise alignment of structured RNAs. 
This problem is formulated as Problem~\ref{prob:align}, so the output of the problem is 
a usual alignment (contained in $\mathcal{A}(x,x')$).
In contrast to the usual alignment problem, we can consider
not only nucleotide sequences but also secondary structures in each sequence 
for the problem. 
Note that this does {\em not} mean the structural alignment~\cite{sankoff1985} 
of RNA sequences, because
the structural alignment produces both alignment and the common secondary structure
simultaneously.

The probability distributions $p^{(a)}(\theta|x,x')$ on $\mathcal{A}(x,x')$ 
described in the previous section
are not able to handle secondary structures of each RNA sequence.
In order to obtain a probability distribution on $\mathcal{A}(x,x')$ 
that considers secondary structure, we employ the marginalization of the
Sankoff model~\cite{sankoff1985}
that gives a probability distribution (denoted by $p^{(sa)}(\theta|x,x')$) 
on the space of possible structural alignments between two RNA sequences (denoted by 
$\mathcal{SA}(x,x')$).
In other words, 
we obtain a probability distribution on the space $\mathcal{A}(x,x')$
by marginalizing the probability distribution of {\em structural} alignments of 
two RNA sequences (given by the Sankoff model) 
into the space $\mathcal{A}(x,x')$ as follows.
\begin{align}
p(\theta|x,x')=\sum_{\theta'\in\Phi^{-1}(\theta)}p^{(sa)}(\theta'|x,x')\label{eq:m_p_align}
\end{align}
where $\Phi$ is the projection from $\mathcal{SA}(x,x')$ into $\mathcal{A}(x,x')$,  
$\theta\in\mathcal{A}(x,x')$ and $\theta'\in\mathcal{SA}(x,x')$.
The difference between this marginalized probability distribution and the distributions
such as Miyazawa model
%described in Section~\ref{sec:P_a} 
is that the former considers secondary structures of 
each sequence (more precisely, the former considers the common secondary structure).

Then, the $\gamma$-centroid estimator with this distribution Eq.~(\ref{eq:m_p_align}) will 
give a 
reasonable estimator for the pairwise alignment of two RNA sequences.
However, the computation of this estimator demands huge computational cost because it uses the
Sankoff model (cf. it requires $O(L^6)$ time for computing the matching probability 
matrix of structural alignments). 
Therefore, we employed the approximated $\gamma$-type estimator (Definition~\ref{est:approx}) of 
the $\gamma$-centroid estimator with the marginalized distribution as follows.
\begin{estimator}[Approximated $\gamma$-type estimator for Problem~\ref{prob:align}{} with two RNA sequences]\label{est:al_rna}
In Problem~\ref{prob:align} where $x$ and $x'$ are RNA sequences,
we obtain the approximated $\gamma$-type estimator (Estimator 2) with the following settings.
The parameter space is given by $\Theta=\Theta'\times\Theta'^\perp$ where 
\begin{align*}
&\Theta' = \mathcal{A}(x,x') (=Y), \quad \Theta'^\perp=\mathcal{S}(x) \times \mathcal{S}(x')
\end{align*}
and the probability distribution on the parameter space $\Theta$ is defined by
\begin{align*}
&p(\theta|x,x') = 
p^{(a)}(\theta^{(a,x,x')}|x,x')p^{(s)}(\theta^{(s,x)}|x)p^{(s)}(\theta^{(s,x')}|x')
\end{align*}
for $\theta=(\theta^{(a,x,x')}, \theta^{(s,x)}, \theta^{(s,x')}) \in \Theta$.
The pointwise gain function of Eq.~(\ref{def:app_gamma_gain}) is defined by 
\begin{align*}
\delta_{uv}(\theta) = w_1 \theta_{uv}^{(a,x,x')} + 
w_2
\left(
\overline{R}_{uv}(\theta)+\overline{L}_{uv}(\theta') 
\right)+
w_3
\eta_{u}^{(x)} \eta_{v}^{(x')}
\end{align*}
where
\begin{align*}
\overline{R}_{uv}(\theta):=
\sum_{j:u<j,l:v<l} \theta_{uj}^{(s,x)} \theta_{vl}^{(s,x')} \theta_{jl}^{(a,x,x')}, \\
\overline{L}_{uv}(\theta):=
\sum_{i:i<u, k:k<v} \theta_{iu}^{(s,x)} \theta_{kv}^{(s,x')} \theta_{ik}^{(a,x,x')}, \\ 
\eta_{u}^{(x)}:=\prod_{j:u<j} (1-\theta^{(s,x)}_{uj})\prod_{j:j<u} (1-\theta^{(s,x)}_{ju}),
\end{align*}
and $w_1$, $w_2$ and $w_3$ are positive weights that satisfy
$w_1+w_2+w_3=1$.
\end{estimator}

This approximated $\gamma$-type estimator is
equivalent to the estimator proposed in~\cite{pmid19808876} and the alignment of the 
estimator can be computed by the following property.
\begin{property}[Computation of Estimator~\ref{est:al_rna}]
The alignment of Estimator~\ref{est:al_rna} can be computed by maximizing the sum of probabilities 
$p_{uv}$ that are larger than $1/(\gamma+1)$ where
\begin{align}
&p_{uv} = w_1 p_{uv}^{(a,x,x')} +  \nonumber\\
&\qquad w_2 \Bigl( \sum_{j:u<j, l:v<l} p_{uj}^{(s,x)} p_{vl}^{(s,x')} p_{jl}^{(a,x,x')} +
\sum_{i:i<u, k:k<v} p_{iu}^{(s,x)} p_{kv}^{(s,x')} p_{ik}^{(a,x,x')} \Bigr) + 
w_3 q_u^{(s,x)} q_v^{(s,x')}.\label{eq:pseudo_prob_align_st_rna}
\end{align}
Here, we define
\begin{align*}
&p_{ij}^{(s,x)}=\sum_{\theta\in\mathcal{S}(x)} \theta_{ij} p^{(s)}(\theta|x),\\
&q_u^{(s,x)}=1-\sum_{i:i<u} p_{iu}^{(s,x)}-\sum_{j:u<j}p_{uj}^{(s,x)} \mbox{ and }\\
&p_{uv}^{(a,x,x')}=\sum_{\theta\in\mathcal{A}(x,x')} \theta_{uv} p^{(a)}(\theta|x,x').
\end{align*}
Therefore, the pairwise alignment of Estimator~\ref{est:al_rna} 
can be computed by a Needleman-Wunsch-type dynamic program of 
Eq.~(\ref{eq:DP_a_centroid_suppl}) in which we replace $p_{ij}$ 
with Eq.~(\ref{eq:pseudo_prob_align_st_rna}).
\end{property}
Note that $p_{uv}$ in Eq.~(\ref{eq:pseudo_prob_align_st_rna}) is considered as a
{\em pseudo}-aligned base probability where $x_u$ aligns with $x_v$.

By checking Eq.~(\ref{eq:condition_ae}), we obtain the following property:
\begin{property}[Computation of Estimator~\ref{est:al_rna} with $0\le\gamma\le 1$]
The pairwise alignment of Estimator~\ref{est:al_rna} can be predicted by collecting
aligned-bases where the
probability in  Eq.~(\ref{eq:pseudo_prob_align_st_rna}) is larger than $1/(\gamma+1)$.
\end{property}

%--------------------------------------------------------------------------------
\subsection{Proofs}\label{sec:proof}
%--------------------------------------------------------------------------------

In this section, we give the proofs of the theorems, propositions 
and corollary.

%--------------------------------------------------------------------------------
\subsubsection{Proof of Theorem~\ref{theorem:general}}
%--------------------------------------------------------------------------------

We will prove a more general case of Theorem~\ref{theorem:general} where the parameter space $\Theta$
is different from the predictive space $Y$ and a probability distribution on $\Theta$
is assumed (cf. Assumption~\ref{as:our}).

%--------------------------------------------------------------------------------
%Theorem 1
\begin{theorem}\label{theorem:general_suppl}
%--------------------------------------------------------------------------------
In Problem~\ref{prob:1} with Assumption~\ref{as:cl} 
and a pointwise gain function,
suppose that a predictive space $Y$ can be written as
\begin{equation}
Y = \bigcap_{k=1}^K C_k,\label{eq:constrain_Y_suppl}
\end{equation}
where $C_k$ is defined as
\begin{equation*}
C_k =\biggl\{ y \in \{0,1\}^n\bigg|\sum_{i \in I_k} y_i\le 1\biggr\}
\mbox{ for } k=1,2,\ldots,K\label{eq:constrain_suppl}
\end{equation*}
for an index-set $I_k \subset \{1,2,\ldots,n\}$.
If the pointwise gain function in Eq.~(\ref{eq:gain_function}) (we here think $\theta$ is in a parameter space 
$\Theta$ which might be different from $Y$) satisfies the condition
\begin{equation}
F_i(\theta,1)-F_i(\theta,0)+F_j(\theta,1)-F_j(\theta,0) \le 0\label{eq:zyouken_suppl}
\end{equation}
for every $\theta\in \Theta$ and every $i,j \in I_k$ ($1\le k \le K$),
then the consensus estimator is in the predictive space $Y$,
and hence the MEG estimator contains the consensus estimator.
\end{theorem}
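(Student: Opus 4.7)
The plan is to argue by contradiction on whether the consensus estimator $\hat y^{(c)}$ lies in $Y$. Suppose $\hat y^{(c)} \notin Y$. By (\ref{eq:constrain_Y_suppl}), this means some constraint $C_k$ is violated, so there exist distinct $i,j \in I_k$ with $\hat y^{(c)}_i = \hat y^{(c)}_j = 1$. By definition of the consensus estimator (Definition~\ref{def:cons}), setting a coordinate to $1$ requires
\[
E_{\theta|D}[F_i(\theta,1)] \ge E_{\theta|D}[F_i(\theta,0)] \quad \text{and} \quad E_{\theta|D}[F_j(\theta,1)] \ge E_{\theta|D}[F_j(\theta,0)].
\]
Adding these inequalities and using linearity of expectation gives
\[
E_{\theta|D}\bigl[F_i(\theta,1) - F_i(\theta,0) + F_j(\theta,1) - F_j(\theta,0)\bigr] \ge 0.
\]
On the other hand, the hypothesis (\ref{eq:zyouken_suppl}) states that the integrand is pointwise $\le 0$ for every $\theta \in \Theta$, so the expectation is also $\le 0$.

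Combining these shows the expectation equals $0$, and a non-positive quantity with zero expectation must vanish almost surely. In particular, both of the original inequalities must actually be equalities: $E_{\theta|D}[F_i(\theta,1)] = E_{\theta|D}[F_i(\theta,0)]$ and similarly for $j$. Hence $0$ is also a valid argmax value at coordinates $i$ and $j$, and we may break the tie by selecting $\hat y^{(c)}_i = 0$ (and/or $\hat y^{(c)}_j = 0$) without leaving the argmax set. Applying this tie-breaking rule uniformly (e.g., whenever both values tie, choose $0$) to all coordinates yields a consensus estimator $\hat y^{(c)} \in Y$, since then at most one of $\hat y^{(c)}_i, \hat y^{(c)}_j$ can equal $1$ for any conflicting pair.

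For the second assertion, observe that the expected pointwise gain decomposes as
\[
E_{\theta|D}[G(\theta,y)] = \sum_{i=1}^n E_{\theta|D}[F_i(\theta,y_i)],
\]
so maximizing over the unconstrained cube $\{0,1\}^n$ is achieved exactly by the consensus estimator. Since this unconstrained maximum upper-bounds the maximum over $Y$, and since we just showed $\hat y^{(c)} \in Y$ attains it, the consensus estimator is an MEG estimator. The main obstacle I anticipate is the tie-breaking subtlety: one must verify that ``setting a tied coordinate to $0$'' is globally consistent, but this is immediate since reducing any $\hat y^{(c)}_i$ from $1$ to $0$ can only make the constraints $\sum_{i \in I_k} y_i \le 1$ easier to satisfy, never harder.
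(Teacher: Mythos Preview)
Your proof is correct and follows essentially the same contradiction argument as the paper: assume $\hat y^{(c)}\notin Y$, find a violated constraint $C_{k_0}$ with $\hat y^{(c)}_i=\hat y^{(c)}_j=1$, and derive a contradiction by summing the two argmax inequalities against the pointwise hypothesis~(\ref{eq:zyouken_suppl}). The only difference is that the paper writes strict inequalities $E[F_i(\theta,1)]>E[F_i(\theta,0)]$ (implicitly assuming the convention that ties default to $0$), whereas you explicitly handle the tie case via a tie-breaking argument; your treatment is slightly more careful but otherwise identical in structure.
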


\begin{proof}
It is sufficient to show that the consensus estimator $\hat y^{(c)}$
is contained in the predictive space $Y$ because 
$\bar{G}(\hat y) \le \bar{G}(\hat y^{(c)})$ for all $\hat y$ in the MEG estimators, 
%\in \hat y^{(MEG)}$ 
where
\begin{eqnarray*}
\overline{G}(y) := E_{\theta|D} [G(\theta,y)]=\int G(\theta, y) p(\theta|D)d\theta.
\end{eqnarray*}
If we assume that $\hat y^{(c)}$
is {\em not} contained in the predictive space, $Y$ that is, 
$\hat y^{(c)} \not \in Y$, then there exists a $k_0$ such that
$\hat y^{(c)} \not \in C_{k_0}$. 
Because $\hat y^{(c)}$ is a binary vector, 
there exist indexes $i,j\in I_{k_0}$ such that $i \ne j$, 
$\hat y^{(c)}_i=1$ and $\hat y^{(c)}_j=1$.
By the definition of $\hat y^{(c)}$, we obtain
\begin{equation*}
E\left[F_i(\theta,1)\right] > E\left[F_i(\theta,0)\right] \mbox{ and }
E\left[F_j(\theta,1)\right] > E\left[F_j(\theta,0)\right].
\end{equation*}
Therefore, we obtain
\begin{align*}
0&<E\left[F_i(\theta,1)-F_i(\theta,0)+F_j(\theta,1)-F_j(\theta,0)\right] \\
&= \int \left[F_i(\theta,1)-F_i(\theta,0)+F_j(\theta,1)-F_j(\theta,0)\right]p(\theta|D)d\theta\\
&\le 0.
\end{align*}
In order to prove the last inequality, we use Eq.~(\ref{eq:constrain}).
This leads to a contradiction and the theorem is proved.
\end{proof}
\begin{remark}
It should be noted that the above theorem holds for an arbitrary parameter space
including continuous-valued spaces.
\end{remark}

%--------------------------------------------------------------------------------
\subsubsection{Proof of Theorem~\ref{thm:impl_g_est}}\label{sec:proof_impl_g_est}
%--------------------------------------------------------------------------------
%
\begin{proof}
Because $I(y_{i}=1)+I(y_{i}=0)=1$ for arbitrary $i$, 
we obtain, using the definitions given in equations 
(\ref{eq:tp}),(\ref{eq:tn}),(\ref{eq:fp}) and (\ref{eq:fn}),
\begin{align*}
&TP+FN=\sum_{i}I(\theta_{i}=1) \mbox{ and } TN+FP=\sum_{i}I(\theta_{i}=0).
\end{align*}
Therefore, we have
\begin{align*}
\alpha_1 TP &+ \alpha_2 TN - \alpha_3 FP - \alpha_4 FN \\
&= (\alpha_1+\alpha_4) TP + (\alpha_2+\alpha_3) TN 
-\alpha_3\sum_{i}I(\theta_{i}=0) -\alpha_4 \sum_{i}I(\theta_{i}=1)\\
&= (\alpha_2+\alpha_3) \left( \frac{\alpha_1+\alpha_4}{\alpha_2+\alpha_3} TP +  TN \right) 
-\alpha_3\sum_{i}I(\theta_{i}=0) -\alpha_4 \sum_{i}I(\theta_{i}=1)
\end{align*}
and this leads to the proof of the theorem. 
\end{proof}

%--------------------------------------------------------------------------------
\subsubsection{Proof of Theorem~\ref{theorem:g_centroid_th}}
%--------------------------------------------------------------------------------

\begin{proof}
The expectation of the gain function of the $\gamma$-centroid estimator 
is computed as
\begin{align*}
E_{\theta|D}[G(\theta,y)] 
&= \sum_{\theta\in \Theta} \sum_{i=1}^n \left[\gamma I(\theta_i=1)I(y_i=1)+I(\theta_i=0)I(y_i=0)\right] p(\theta|D)\\
&= \sum_{i=1}^n \left[\gamma\cdot p_i \cdot I(y_i=1)+(1-p_i)(1-I(y_i=1))\right]\\
&= \sum_{i=1}^n\left[(\gamma+1)p_i-1\right] I(y_i=1) + \sum_i (1-p_i)
\end{align*}
where $p_i=p(\theta_i=1|D)=\sum_{\theta\in\Theta} I(\theta_i=1)p(\theta|D)$ is 
the marginalized probability.
Therefore, we should always predict $y_i=0$ whenever $p_i < 1 / (\gamma+1)$, because
the assumption of Theorem~\ref{theorem:g_centroid_th} ensures that 
the prediction $y_i=0$ never violate the condition of the predictive space $Y$.
Theorem~\ref{theorem:g_centroid_th} follows by using those facts. 
\end{proof}

%% In many predictive spaces, the assumptions of this theorem are satisfied automatically.
%% For example, every space in Section~\ref{sec:Y_s} satisfies the required condition.

%--------------------------------------------------------------------------------
\subsubsection{Proof of Corollary~\ref{cor:WC_01}}
%--------------------------------------------------------------------------------

%\input{../proof-cor01.tex}
%
\begin{proof}
For every $\theta\in\Theta$, $k = 1,2,\ldots, K$, $i,j\in J_{k}$, $\gamma\in[0,1]$, 
we have
\begin{align*}
F_i(\theta,1) & -F_i(\theta,0)+F_j(\theta,1)-F_j(\theta,0) \nonumber\\
&= \gamma I(\theta_i=1)-I(\theta_i=0)+\gamma I(\theta_j=1)-I(\theta_j=0)\\
&\le 2\left(I(\theta_i=1)+I(\theta_j=1)\right)-2\\
&\le 0
\end{align*}
and the condition of Eq.~(\ref{eq:zyouken}) in Theorem~\ref{theorem:general} is satisfied
(in order to prove the last inequality, we use $I(\theta_i=1)+I(\theta_j=1)\le 1$ 
because $i,j \in J_k$). 
Therefore, by Theorem~\ref{theorem:general},
the $\gamma$-centroid estimator contains its consensus estimator.

The last half of the corollary is easily proved using the equation
\begin{equation*}
\sum_{\theta\in\Theta} F_i(\theta,y_i) p(\theta|D)
= \sum_{\theta\in\Theta} \left(I(\theta_i=y_i=0)+\gamma I(\theta_i=y_i=1)\right)p(\theta|D)
=\left\{\begin{array}{ll}
\gamma p_i & \mbox{ for } y_i =1\\
1-p_i & \mbox{ for } y_i =0
\end{array}\right.
\end{equation*}
where $p_i=p(\theta_i=1|D)=\sum_{\theta\in\Theta}I(\theta_i=1)p(\theta|D)$.
\end{proof}

%--------------------------------------------------------------------------------
\subsubsection{Proof of Proposition~\ref{prop:equiv_common_estimator}}
%--------------------------------------------------------------------------------

\begin{proof}
The representative estimator in Definition~\ref{def:rep_est} can be written as
\begin{align*}
\hat y 
&= \argmax_{y\in Y} \int G(\theta,y)p(\theta|D)d\theta\\
&= \argmax_{y\in Y} \int \biggl[ \sum_{k=1}^K G'(\theta^k,y) \biggr] \biggl[ \prod_{k=1}^K p^{(k)}(\theta^k|D)\biggr] d\theta\\
&= \argmax_{y\in Y} \int G'(\theta',y) \biggl[ \frac{1}{K}\sum_{k=1}^K p^{(k)}(\theta'|D)\biggr] d\theta'
\end{align*}
Then, we finish the proof of Proposition~\ref{prop:equiv_common_estimator}.
\end{proof}

%% %--------------------------------------------------------------------------------
%% \subsubsection{Proof of Proposition~6}
%% %--------------------------------------------------------------------------------

%% \begin{proof}
%% This proposition is derived directly by combining item 2 of Proposition~5 with Theorem~3.
%% \end{proof}

%% %--------------------------------------------------------------------------------
%% \subsubsection{Proof of Proposition~7}
%% %--------------------------------------------------------------------------------

%% \begin{proof}
%% This proposition is derived directly from Proposition~6 and Corollary~4.
%% \end{proof}

%--------------------------------------------------------------------------------
\subsubsection{Derivation of Eq.~(\ref{eq:condition_ae})}
%--------------------------------------------------------------------------------

The equation is easily derived from the equality
$F_i(\theta',1)-F_i(\theta',0)=(\gamma+1)\delta_i(\theta')-1$.

%% %--------------------------------------------------------------------------------
%% \subsubsection{Proof of Proposition~8}
%% %--------------------------------------------------------------------------------

%% \begin{proof}
%% This proposition follows from Corollary~4 and the definition of the space $\mathcal{T}(S)$.\enproof
%% \end{proof}
%
%--------------------------------------------------------------------------------
%--------------------------------------------------------------------------------
%
%\bibliographystyle{alpha}
%\bibliographystyle{abbrv}
%{%\small
%\bibliography{bib-paper}
%}
%\input{PLOS-gcentroid-suppl-bbl.tex}

%\end{document}

%\section{References}
% The bibtex filename
%\bibliography{template}

\bibliographystyle{plos2009}
\bibliography{bib-paper}
%\input{PLOS-gcentroid-bib.tex}

%\clearpage 

%\section{Figures}

%

%

%

%
%

%\begin{figure}[!ht]
%\begin{center}
%%\includegraphics[width=4in]{figure_name.2.eps}
%\end{center}
%\caption{
%{\bf Bold the first sentence.}  Rest of figure 2  caption.  Caption 
%should be left justified, as specified by the options to the caption 
%package.
%}
%\label{Figure_label}
%\end{figure}

%\section{Tables}

%\begin{table}[!ht]
%\caption{
%\bf{Table title}}
%\begin{tabular}{|c|c|c|}
%table information
%\end{tabular}
%\begin{flushleft}Table caption
%\end{flushleft}
%\label{tab:label}
% \end{table}

\end{document}